\newtheorem{theorem}{Theorem}
\newtheorem{proposition}{Proposition}
\newtheorem{remark}{Remark}
\newtheorem{lemma}[theorem]{Lemma}
\newtheorem{corollary}[theorem]{Corollary}
\newtheorem{definition}{Definition}
\newtheorem{example}{Example}
\newtheorem{problem}{Problem}
\newenvironment{proof}
	{
          \noindent {\bf Proof.} 
        }{$\Box$ \vspace{6pt}}
\newcommand{\cbs}{\textsf{CBS}}
\newcommand{\al}[2]{\ensuremath{\alpha \left( #1, \, #2 \right)}}
\newcommand{\be}[2]{\ensuremath{\beta  \left( #1, \, #2 \right)}}
\newcommand{\Prob}[1]{\mathbf{Pr}\left\{ #1 \right\}}
\newcommand{\Exp}[1]{\mathbf{E}\left\{ #1 \right\}}
\newcommand{\natu}{\mathbf{N}}
\newcommand{\real}{\mathbf{R}}
\newcommand{\gm}[1]{\gamma_{#1}}
\begin{document}
\title{An Analytical Solution for Probabilistic Guarantees of
  Reservation Based Soft Real--Time Systems}
\author{\IEEEauthorblockN{Luigi Palopoli${^1}$, Daniele
    Fontanelli${^2}$, Luca Abeni${^1}$ Bernardo Villalba Fr\'{i}as${^1}$}
  \IEEEauthorblockA{${^1}$Dipartimento di Scienza e Ingegneria
    dell'Informazione\\
    ${^2}$Dipartimento di Ingegneria Industriale\\
    University of Trento,
    Trento, Italy\\
    \{luigi.palopoli,daniele.fontanelli,luca.abeni,br.villalbafrias\}@unitn.it }
  \thanks{The research leading to these results has received funding
    from the European Union FP7 Programme
    (FP7/2007-2013) under grant agreement n$^\circ$ ICT-2011-288917
    ``DALi - Devices for Assisted Living'' and under grant agreement
    $n^\circ$ FP7-ICT-257462 ``HYCON2 NoE'', and from the European
    Union H2020 programme under grant agreement $n^\circ$ 643544 ``ACANTO''} }

\maketitle
\begin{abstract}
  We show a methodology for the computation of the probability of
  deadline miss for a periodic real--time task scheduled by a resource
  reservation algorithm.  We propose a modelling technique for the
  system that reduces the computation of such a probability to that of
  the steady state probability of an infinite state Discrete Time
  Markov Chain with a periodic structure. This structure is exploited
  to develop an efficient numeric solution where different
  accuracy/computation time trade--offs can be obtained by operating on
  the granularity of the model. More importantly we offer a closed
  form conservative bound for the probability of a deadline miss.  Our
  experiments reveal that the bound remains reasonably close to the
  experimental probability in one real--time application of
  practical interest. When this bound is used for the optimisation of the
  overall Quality of Service for a set of tasks sharing the CPU, it
  produces a good sub--optimal solution in a small amount of time.
\end{abstract}

\begin{IEEEkeywords}
  Real--time systems, Scheduling, Probabilistic Guarantees
\end{IEEEkeywords}

\section{Introduction}
The term {\em soft real--time} is used for a class of
real--time applications that are resilient to
occasional and controlled timing faults. Significant examples include
multimedia streaming~\cite{Abe98-2}, computer vision and real--time
control~\cite{FontanelliGP13,Cer04}. 

An effective method to express the timing requirements for
a soft real--time application is by associating each deadline with a
probability that it will be met: the notion of
\emph{probabilistic deadlines}~\cite{Abe98-3}.  Probabilistic
deadlines can be related to the
Quality of Service (QoS) delivered by the application~\cite{psnr-tools, FontanelliGP13} and, more
generally, enable the expression of a wide
range of performance requirements, where classic hard real--time
systems can be regarded as a special case.

In traditional hard real--time applications, the use of fixed or
dynamic scheduling priorities has gained an undisputed
prominence.  Part of the reasons of this
success is in the presence of efficient numeric techniques that make for
the provision of tight conditions for temporal guarantees~\cite{Pan86}. 
At least as important is a group of approximate analytical results. The most famous is the
utilisation bound~\cite{Liu73},
which offers clear guidelines on how to tweak periods and computation
times in order to meet the deadlines of all tasks in the system.

The use of scheduling priorities allows the designer to define a
partial order between all the tasks in a set and inevitably couples
their timing behaviour. This is acceptable if the purpose is to offer
guarantees for the set as a whole. On the contrary, if the designer
requires specific QoS levels for each task,
scheduling priorities can be too coarse a tool.  For this reason an
intense research work has produced alternative scheduling solutions
for soft real--time systems.  One of the most popular is the
\emph{Resource Reservations} scheduling (RR)~\cite{Raj98,Abe98-2}, which
enables a fine grained control on the fraction of computing power
(bandwidth) that each task receives.   A
key property of RR scheduling is \emph {temporal isolation}:
the ability for a task to meet its deadlines solely depends on its
computation requirement and on its scheduling parameters.  This
property enables the provision of specific temporal guarantees to each
task and simplifies system design. RR scheduling is now
available in the mainstream Linux
Kernel\footnote{\url{https://www.kernel.org/doc/Documentation/scheduler/sched-deadline.txt}}.

When the probability distribution of inter--arrival time and of
computation time are known independent identically distributed
(i.i.d.) stochastic processes, temporal isolation allows modelling the evolution of a task scheduled
through a RR as a Discrete--Time
Markov Chain (DTMC) with an infinite number of
states~\cite{Abe98-3,Abe01-1}. 
In this paper, we restrict the focus to the analysis of periodic tasks.
For this case, we can see that the DTMC describing the system
takes the form of a Quasi--Birth--Death Process (QBDP)~\cite{bini2005numerical}. We introduce a
granularity parameter that allows us to reduce the complexity of the
model at the expense of a conservative approximation in the computation of
the probability. 
We show a novel analysis that exploits the specific structure of the
transition matrix of this QBDP. The outcome is an expression for the
steady state probability of meeting the deadline, which can be used in
different ways. The first one is for the construction of a numeric
algorithm for probabilistic guarantees, with a performance comparable
to the best state of the art techniques for numeric solutions of
QBDP. The second one, the most important, is for the computation of an
analytical conservative bound for the probability of meeting the
deadline.  This bound proves itself reasonably accurate for a large
set of synthetic test cases.  We have also performed a large
collection of experimental data for a real--life application, in which
the presence of several non--idealities (OS overhead, correlation in
the computation times, etc.) challenges the assumptions the method
relies on. The small approximation error that we observed in the
experiments suggests the practical applicability of the method at
least in the considered scenario.  The application of the
bound is very convenient when solving QoS optimisation problems that
require to efficiently identify the minimum bandwith required for
a desired probability of deadline miss. We show a
realistic example of this kind where the application of the analytic
bound produces a good sub--optimal solution in a tiny fraction of the
time required by a numeric approach.

 The paper is organised as follows. In Section~\ref{sec:relwork}, we
 offer a brief survey of the related work. In
 Section~\ref{sec:problem} we formally describe the problem addressed
 in the paper.  In Section~\ref{sec:model},we show how a resource
 reservation can be conservatively modelled as a QBDP.  The
 computation of our analytical bound is reported in
 Section~\ref{sec:analytical}. In
 Section~\ref{sec:experiments}, we prove the validity of the bound in
 a large set of experiments. In Section~\ref{sec:example}, we show the
 concrete application of the method to a QoS optimisation
 problem. Finally, in Section~\ref{sec:conclusions} we offer our
 conclusions and announce the future work directions.

\section{Related Work}
\label{sec:relwork}

The stochastic analysis of performance of soft real--time tasks
started two decades ago.  The same task model presented in this paper
(a triple of period, probability distribution of the task computation
time and requested probability of deadline miss in the long run)
has been also adopted in
the statistical rate monotonic approach~\cite{Atl98}.
More recently, an important number of research papers has concentrated
on the computation of the response time of systems with fixed or
dynamic priority when tasks have stochastic variability in computation
times~\cite{Dia03,Dia04, DBLP:conf/rtss/MaximC13}, in the
inter--arrival time~\cite{Cuc06} or in both~\cite{Gior07}.  Similar
techniques have recently been applied to multiprocessor
systems~\cite{Mil10}.  An obvious point of differentiation between our
technique and the ones describes so far is that while these papers
propose numeric techniques, we offer an analytic bound that is
satisfactorily tight in many cases of interest. A very interesting
connection can be established with the work of Diaz et
al.~\cite{Dia03}, where the authors propose the exact solution for a
specific numeric example. Our computation, on the contrary, applies to
general cases.
What is more, all the approaches mentioned above analyse the task set as a
whole, since real--time schedulers do not enjoy
temporal isolation. This makes QoS optimisation much
more difficult than in our case.

Other authors have analysed scheduling approaches other than
``traditional'' fixed or dynamic priorities. Dong-In et
al.~\cite{Don97} have analysed Time Division Multiple Access (TDMA)
approaches, Haman et al.~\cite{Ham01} have focused on a model where
tasks are split in mandatory and optional parts.  This paper is based
on reservation--based scheduling~\cite{Abe98-2,Raj98}, which allows us
to exploit temporal isolation and analyse each task separately.
Abeni and Buttazzo proposed a model for RR scheduling based
on queueing theory~\cite{Abe98-3,Abe01-1}.  The computation of the
deadline miss probability requires to numerically solve an eigenvector
problem for an infinitely large matrix.  Recently, approximated solution
techniques have been proposed for efficient numeric computation of a
bound for the probability of meeting the deadline~\cite{Ref10}.

In this paper, we show how the adoption of the reservation scheduler
and the restriction to periodic tasks produces a model that is a
particular instance of a QBDP.  Efficient numeric solutions for QBDP
and for M/G/1 queue can be found in the work
of Latouche and Ramaswami~\cite{LatoucheR87} and of Neuts~\cite{neuts1995matrix}, who pioneered the application of matrix
geometric methods for the solution of infinite M/G/1 queues.  
The literature in the field is rich of optimised methods derived using
specific properties of the transition matrix. The most remarkable
achievements are summarised in a comprehensive
book~\cite{bini2005numerical}.  In this paper, we consider numeric
methods as a basis for comparison but our main focus is on analytical
closed form solutions.

Mills and Anderson~\cite{mil11-rtcsa} have recently
considered the problem of stochastic analysis for resource
reservations on multiprocessor systems.  The authors main focus is on
the computation of tardiness and response time bounds for the average
case. The authors also offer a very conservative result on the
probabilistic deadlines, which is applicable
only if deadlines much larger than the period are considered.

A customary assumption made in the literature on queueing networks is
that inter--arrival times and service times are i.i.d. processes. In this
paper, we stick to the same assumption. Different authors have
recently questioned on the applicability of the i.i.d. assumption in
the area of real--time applications~\cite{Santos2215}. Remarkable is
the so called notion of probabilistic worst case execution
time~\cite{bernat2005probabilistic}, which essentially corresponds
to associating a worst case to several execution scenarios that take
place within a given probability.  A possible evolution of this
concept could lead to finding an i.i.d. overapproximation for a computation
process that is not i.i.d. A similar idea underpins a recent work by
Liu et al.~\cite{Liu14}, where the authors tackle the correlation
problem decomposing the process into a deterministic and an
i.i.d. component. In a similar context our results could be used to
study the evolution of the system under the action of the
i.i.d. component or of the i.i.d. overapproximation of the process.

A complementary issue to our work is how to derive statistically
sound estimates for the probability distribution of the computation
time. A useful inspiration could come from the application of the
Extreme Value Theory~\cite{6257562}, but the matter is reserved
for future investigations.

The results shown in this paper take to its natural completion a line
of work started a few years ago that has produced a number of
intermediate results. The relation with our prior achievements is
detailed in Section~\ref{sec:discussion}.

\section{Problem Description}
\label{sec:problem}
\subsection{Task Model}
We consider a set of real--time tasks $\left\{ \tau_{i} \right\} $
sharing a {\em processing unit} (CPU).  A real--time task $\tau_{i}$
consists of a stream of jobs $J_{i, k}$. Each job $J_{i, k}$ arrives
(becomes eligible for execution) at time $r_{i, k}$, and finishes at time $f_{i,
  k}$ after executing for a time $c_{i, k}$.  We restrict to periodic
tasks, meaning that two adjacent arrivals are spaced out by a fixed
amount of time $T_i$: $r_{i,\,k+1}=r_{i\,k}+T_i$.

The computation time of each job $c_{i, k}$ is assumed to be an i.i.d.
stochastic process $\mathcal{U}_i$. For each $k$ the computation time
is a random variable described by the Probability Mass Function (PMF)
$U_i(c) = \Prob{c_{i, k}=c}$.

Job $J_{i,\,k}$ is associated with a deadline $d_{i, k} = r_{i,\, k} +
D_i$ (where $D_i$ is said relative deadline), that is respected if
$f_{i,\, k} \leq d_{i,\, k}$, and is missed if $f_{i,\, k} > d_{i,\,
  k}$.  In this work, {\em probabilistic deadlines}~\cite{Abe98-3} are
used instead of traditional hard deadlines $d_{i, k}$.  A
probabilistic deadline $(D_i,\,p_i)$ is respected if $\Prob{f_{i, \,k}
  > r_{i,\, k} +D_i} \leq p_i$. If $p_i=0$ the deadline is hard.

\subsection{The scheduling algorithm}

As multiple real--time tasks may be concurrently active, we use a 
RR scheduler. Each
task $\tau_{i}$ is associated with a reservation $(Q^s_{i},T^s_{i})$,
meaning that $\tau_{i}$ is allowed to execute for $Q^s_{i}$
(\emph{budget}) time units in every interval of length $T_{i}^s$
(\emph{reservation period}).  The fraction of CPU allocated to the task
is said bandwidth $B_i$ and
is defined as $B_{i}=Q^s_{i}/T_{i}^s$.  The particular
implementation of the RR approach that we consider
is the Constant Bandwidth Server (\cbs)~\cite{Abe98-2}. In the
\cbs, reservations are implemented by means of an Earliest
Deadline First (EDF) scheduler.  The EDF schedules tasks
$\{\tau_{i}\}$ based on their \emph{scheduling deadlines} $d^s_{i,
  k}$, which are dynamically managed by the \cbs\/ algorithm.
When a new job $J_{i,k}$ arrives, the server checks whether it can be
scheduled using the last assigned scheduling deadline $d_{i,
  k-1}^{s}$. In the affirmative case, the scheduling deadline of the
job is initially set to current deadline $d_{i, k}^{s} = d_{i,
  k-1}^{s}$. Otherwise, the initial deadline $d^{s}_{i,k}$ is set
equal to $r_{i,k}+T^s_{i}$.  Every time the job executes for $Q^s_{i}$
time units (i.e., its budget is depleted), its scheduling deadline is
postponed by $T_i^s$: $d^{s}_{i,k} = d^{s}_{i,k} + T_i^s$.  This way,
the task is prevented from executing for more than $Q^s_{i}$ units
with the same deadline.  As a consequence, each task is reserved an
amount of computation time $Q^s_i$ in each server period $T_i^s$
regardless of the behaviour of the other tasks. This property is
called \emph{temporal isolation} and it holds as long as the system
satisfies the following \emph{schedulability condition}:
\begin{equation}
\sum_{i}B_{i} = \sum_i \frac{Q^s_i}{T^s_i}\leq 1. \label{eq:consistency}
\end{equation}

The scheduling deadline $d^s_{i,\, k}$ has, in general, nothing to do
with the deadline $d_{i,\, k}$ of the job: it is simply instrumental
to the implementation of the \cbs\/ (see~\cite{Abe98-2} for more details).

\subsection{Problem Statement}
In view of the temporal isolation property, each task is guaranteed a
minimum share of the processor $Q^s_i/T^s_i$ independently of the
behaviour of the other tasks. As a consequence, it is possible to
carry out a conservative analysis leading to the computation of a
lower bound of the probability of respecting a deadline assuming that
the task always receives this minimum (as long as
Condition~\eqref{eq:consistency} is respected).  The advantage is that
the behaviour of each task can be studied in isolation. Therefore, we
can remove the subscript $i$ meaning that the analysis refers to one
specific task.

In this setting, our problem is formulated as follows.
\begin{problem}
  Given a periodic real--time task with a stochastic computation time
  characterised by a PMF $U(c)$, find conditions on the
  reservation parameters $(Q^s, T^s)$ such that the task respects
  the probabilistic deadline $(D,\,p)$.
\end{problem}
A few remarks are in order. First of all, we look for analytical
conditions, which can be inverted and offer easy solution for the
problem of system design. Second, in order to be safely utilisable,
such conditions have to be \emph{sufficient} (although necessity is
certainly a desirable additional requirement).

\section{Stochastic Model}
 \label{sec:model}
 In this section, we first recall some basic definitions on Markov
 chains and in particular on QBDP.
  Then, we show how a task scheduled by a resource reservation
 is conveniently modelled as a QBDP (Theorem~\ref{th:QBDP}). Finally, 
we show how to derive a conservative approximation of this model, which
has a parametric accuracy and which retains the structure of a QBDP.

\subsection{Background on Markov Chains}
A \emph{Discrete--Time Markov Process} (DTMP) $\{X_n\}$ is a discrete--time
stochastic process such that its future development only depends on the current
state and not on the past history. This can be stated in formal terms
on the conditional PMF: $\Prob{X_n = x_n | X_1 = x_1, X_2 = x_2,
  \ldots, X_{n-1}=x_{n-1}} = \Prob{X_n = x_n | X_{n-1}=x_{n-1}}$.  A
DTMP defined over a discrete state space is said Discrete--Time Markov
chain (DTMC). Given a DTMC, let $\pi^{(j)}_n$ represent the
probability $\pi^{(j)}(n)=\Prob{X_n=j}$, $\pi_n$ be the vector $\pi_n
= [\pi_n^{(0)},\,\pi_n^{(1)},\ldots]$, $P=[p_{i,j}]$ be a matrix whose
generic element $p_{i,j}$ is given by the conditional probability
$p_{i,j}=\Prob{X_n=j|X_{n-1}=i}$.  Starting from an initial
probability distribution $\pi_0$, the application of the Bayes theorem
and of the properties of the Markov Processes allow us to express the
evolution of the distribution by the matrix equation
$\pi_{n+1}=\pi_{n} P$. The matrix $P$ is said probability transition
matrix. An \emph{equilibrium point} for this dynamic equation is a
vector $\tilde{\pi}$ such that $\tilde{\pi}=\tilde{\pi} P$.  

Consider a state $i$ of a DTMC. Let the random variable $\mathcal{T}_i
= \min \{n>1 \text{ s.t. } X_n = i | X_0 = i\}$ denote the first
return time to state $i$. The state $i$ is transient if
$\Prob{\mathcal{T}_i < \infty } < 1$, i.e., if there is some
probability that starting from $i$ the state will never return to
$i$. The state $i$ is {\em transient} if it is not recurrent. The
\emph{period} $d_i$ of a recurrent state $i$ is defined as the
greatest common divider of the set of all numbers, $n$, for which
$\Prob{X_{m}=i \wedge X_{m+n}=i} > 0, \forall m$.  A state is said
\emph{aperiodic} if its period $d_i = 1$. A DTMC is said aperiodic, if
all of its states are aperiodic.

The mean recurrence time of a state $i$
is the expected value of $\mathcal{T}_i$: $M_i = \Exp{\mathcal{T}_i}$.
The state $i$ is positive recurrent if $M_i$ is finite, and the DTMC
is positive recurrent if all its states are positive recurrent.

A DTMC is said \emph{irreducible}, if every state can be reached from
any other state in a finite number of steps. It can be shown that in an
irreducible DTMC all states are of the same type. So, if one
state is aperiodic, so is the DTMC.

A very important property of irreducible and positive recurrent DTMC
is \emph{the existence of a single equilibrium} $\tilde{\pi} = \tilde{\pi} P$
where the limiting distributions $\lim_{n \rightarrow \infty} \pi_n$
converge starting from any initial probability distribution $\pi_0$.
This equilibrium is called \emph{steady state distribution}.

A DTMC is called a 
Quasi--Birth--Death Process (QBDP)
 if its
probability transition matrix $P$ has the following block structure:
{\small
\begin{equation}
P=\begin{bmatrix}
C & A_0 & 0 & 0 &0 &\cdots\\
A_{2} &A_1 & A_0 & 0 &0 &\cdots\\
0 &A_{2} &A_1 & A_0 &0 &\cdots\\
0 &0 &A_{2} &A_1 & A_0 &\cdots\\
\cdots &\cdots & \cdots & \cdots & \cdots\\ 
\end{bmatrix}
\label{eq:tridiagonal}
\end{equation}
}
When the matrices are scalars, this structure reduces to the standard 
Birth--Death Process (BDP).

\subsection{A resource reservation as a Markov Chain}
\label{sec:full_model}
We will denote by $F_U(c) = \sum_{h=c_{min}}^c U(h)$
the Cumulative Distribution Function (CDF) of the execution time.  For simplicity, we will assume that the server period $T^s$
is chosen as an integer sub--multiple of the activation period $T$: $T
= NT^s$. Other choices are possible but make little
practical sense.

Let $d^s_k$ denote the latest scheduling deadline used for job $J_k$
and introduce the symbol $\delta_k = d^s_k-r_k$.  The latest
scheduling deadline $d^s_k$ is an upper bound for the finishing time
of the job (if Equation~\eqref{eq:consistency} is respected, then $f_k
\leq d^s_k$). Hence, $\delta_k$ is an upper bound for the job response
time.
\begin{example}
  Consider the schedule in Figure~\ref{fig:example}. The schedule in
  the figure considers two adjacent jobs starting at $r_k$ and
  $r_{k+1}$ and the reservation period is chosen as one third of the
  task period. Job $J_k$, in this case finishes beyond the deadline
  (which in our periodic model is $r_{k+1}$). More precisely, the last
  reservation period that it uses (in which its finishing time lies)
  is upper--limited by the scheduling deadline $d^s_k$.
\end{example}

The quantity $\delta_k$ takes on values in a
discrete set: the integer multiples of $T^s$
 and the
probability $p$ of meeting the deadline is lower bounded by
$\Prob{\delta_k \leq D}$.
\begin{figure}[t]
\centerline{\includegraphics[width=0.9\columnwidth]{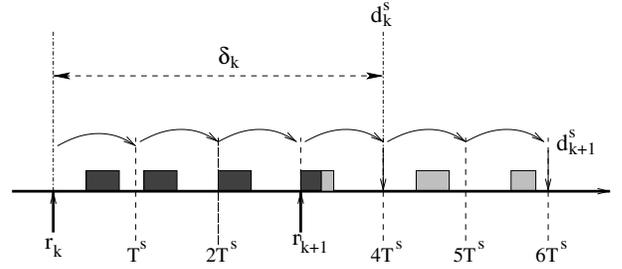}}
\caption{Example schedule of a task by a \cbs. The two colours denote
  different jobs.}
\label{fig:example}
\end{figure}

The evolution of $\delta_k$ is described as follows~\cite{Abe01-1}:
\begin{eqnarray}
v_0 & = & c_0 \nonumber \\
v_{k + 1} & = & \max \{0, v_k - N Q^s \} + c_{k + 1} \nonumber \\
\delta_k  & = & \left \lceil \frac{v_k}{Q^s} \right \rceil T^s
\label{eq:vprob}
\end{eqnarray}
The variable $v_k$ cannot be measured directly and it represents the
amount of backlogged execution time that has to be served by the \cbs\/
scheduler when a new job arrives.

Since the process $\mathcal{U}$ modelling the sequence $c_k$ of the
computation time is assumed a discrete valued and i.i.d. random
process, the model in Equation~\eqref{eq:vprob} represents a
Discrete--Time Markov Chain (DTMC) that we define $\mathcal{M}_0$,
where the states are determined by the possible values of $v_k$ and
the transition probabilities by the PMF of the computation time
$U(c)$.

This model permits a fine--grained modelling of the behaviour of the
reservation, which can be difficult to treat.  One possible
simplification is to collapse into a single state all
the states for which $\delta_k\leq D = N T^s$, which correspond to the
values of $v_k$ such that $v_k \leq NQ^s$. 
In the modified DTMC $\mathcal{M}$, the state $S$ is defined as
\[
S= \begin{cases} 
  0 & \text{if } v_k \leq NQ^s\\
  i & \text{if } v_k = NQ^s+i
\end{cases}.
\]
 By using Equation~\eqref{eq:vprob},
the transition probabilities for this DTMC can be written as follows:
{\small
\[
\begin{array}{l}
p_{i,j}
= \begin{cases}
\Prob{v_{k+1} \leq N Q^s| v_k=i+NQ^s},\,\text{if } j = 0\\
\Prob{v_{k+1}=j+NQ^s | v_k \leq N Q^s},\,\text{if } i = 0, j \neq 0\\
\Prob{v_{k+1} = N Q^s+j| v_k=i+NQ^s},\, \text{if } i \neq 0, j \neq 0\\

\end{cases} \\
= \begin{cases}
\Prob{c_k\leq NQ^s-i}=F_U(NQ^s-i),\,\text{if } j = 0\\
\Prob{c_k=j+NQ^s}=U(j+NQ^s),\,\text{if } i = 0, j \neq 0\\
\Prob{c_k = NQ^s+j-i}=U(j-i+NQ^s),\, \text{if } i \neq 0, j \neq 0.
\end{cases}.
\end{array}
\]
}
Let $\tilde{\pi}_k$ be the (infinite) vector where the $i^{th}$
element represent the probability associated with the $i^{th}$ state
of the DTMC $\mathcal{M}$ after $k$ step of evolution starting from an
initial probability vector $\tilde{\pi}_0$. The recursive equation for the
evolution of $\tilde{\pi}_k$ is $\tilde{\pi}_{k+1}=\tilde{\pi}_k P$.
The objective of our analysis can now be stated as 
\emph{the computation of a lower bound for the first
  element of the steady state probability vector} $\tilde{\pi} =
\lim_{k \rightarrow \infty} \pi_k$.  As long as we are not interested in the
distribution of $\delta_k$ inside the region $\delta_k \leq N Q^s$,
collapsing into one state all the values of $v_k$ smaller than $NQ^s$
does not introduce any error because such states do not have influence
on the next state ($\max \{0, v_k - N Q^s \}=0$ in
Equation~\eqref{eq:vprob}).

The probability matrix $P$ resulting from the computation above has the structure shown in
Figure~\ref{fig:structure}, where 
\[
\begin{array}{l}
a_{H+h} = p_{i,\,i+h} = U(h+NQ^s)\\
b_{H-i} = p_{i,\,0} = F_U(NQ^s-i),
\end{array}
\]
and $H$ is the minimum integer such that $U(NQ^s+h)=0$ for all $h <
H$. This structure is recursive: from row $H$ onward, each row is
obtained by shifting the previous one to the right and inserting a $0$
in the first position.  Furthermore, the first element greater
  than zero of such recursive rows is dubbed $a_0$, while the last
  with $a_n$: $n = \max\{i | a_i > 0\}$.
\begin{figure}[t]
  \[
{\small
\begin{bmatrix}
  b_H & a_{H+1} &\ldots  & a_n & 0 &\ldots \\
  b_{H-1} & a_{H} & a_{H+1} &\ldots  &a_n& \ldots \\
  b_{H-2} & a_{H-1} & a_{H} & a_{H+1} &\ldots  &a_n& \ldots \\
  \ldots & \ldots &\ldots  &\ldots& \ldots &\ldots \\
  b_1 & a_2 &\ldots& a_{H} & a_{H+1} &\ldots  &a_n& \ldots \\
  a_0 & a_1 & a_2 &\ldots & a_{H}& a_{H+1} &\ldots  & \ldots \\
  0 & a_0 & a_1 &a_{h+H-4} &\ldots& a_{H} & a_{H+1} & \ldots \\
  0 & 0 & a_0 &a_1 & a_2 &\ldots& a_{H} & \ldots \\
  \vdots & \vdots & \vdots & \vdots &  \ddots &\ddots &\ddots\\
\end{bmatrix}},
\]
\caption{Structure of the transition matrix $P$}
\label{fig:structure}
\end{figure}
We now introduce a useful notation for sub--matrices.
\begin{definition}
  Let $P=(p_{i,\,j})$ be a matrix whose elements are $p_{i,j}$.  Let
  $\alpha=\left\{i_i,\,i_2,\ldots, \,i_n\right\}$
  $\beta=\left\{j_i,\,j_2,\ldots, \,j_m\right\}$ two ordered set of
  indexes. The sub--matrix $P_{[\alpha,\,\beta]}$ is a matrix whose
  elements are $p_{i_{h}, j_{t}}$ for all $h \in [1,\,n]$ $t \in
  [1,\,m]$.  Likewise, if $\pi$ is a vector, we denote $\pi_{[\alpha]}$
  the sub--vector whose elements are $\pi_{i_{h}}$ for all
  $h \in [1,\,n]$.
\label{def:submatrix}
\end{definition}

From the properties of our transition matrix we can prove the 
 following result~\cite{etfa2012}.
\begin{theorem}
\label{th:QBDP}
  Let $H$ be the minimum integer such that
  $U(NQ^s+h)=0$ for all $h < H$.  Let $F$ be defined as
  $\max\left\{n-H,H\right\}$. Define \al{i}{F} the set
  $\{i,\ldots,\,i+F-1\}$ and $\be{j}{F}$ the set
  $\{j,\ldots,\,j+F-1\}$.  The transition matrix $P$ is
  block--tri--diagonal with the structure in Equation~\ref{eq:tridiagonal},
where $A_0=P_{[\al{F}{F}, \be{0}{H}]}$, $A_2=P_{[\al{0}{F}, \be{F}{F}]}$, 
$A_1=P_{[\al{F}{F}, \be{F}{F}]}$, $C=P_{[\al{0}{F}, \be{0}{F}]}$, are square matrices of order $H.$ This qualifies the process
as a QBDP.
\end{theorem}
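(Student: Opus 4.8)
\begin{proofarg}{Proof sketch}
The plan is to read the statement as a claim about the \emph{bandwidth} of the transition matrix $P$: once we know that the non--zero entries of $P$ are confined to a band around the main diagonal, grouping the states into consecutive blocks of size $F$ automatically produces the block--tri--diagonal pattern of \eqref{eq:tridiagonal}, since a transition between two non--adjacent blocks would have to jump farther than the band allows. The only delicate quantity is the block size, and $F=\max\{n-H,\,H\}$ is exactly the minimal value that makes the scheme close up.

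First I would make the band structure explicit directly from Figure~\ref{fig:structure}. In a recursive row $i\ge H$ the coefficients $a_0,\ldots,a_n$ occupy the columns $i-H,\ldots,i-H+n$, so the diagonal entry is $a_H$ (offset $0$), the leftmost non--zero $a_0$ sits $H$ positions below the diagonal, and the rightmost $a_n$ sits $n-H$ positions above it. Hence $p_{i,\,j}\neq 0$ forces $j-i\in[-H,\,n-H]$; the band extends $H$ below and $n-H$ above the diagonal. The boundary rows $0,\ldots,H-1$ contribute extra mass only in column $0$ through the terms $b_{H-i}=F_U(NQ^s-i)$, and since $H\le F$ all of these special rows, together with the collapsed state $0$, are contained in the first group $\al{0}{F}$.

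Next I would partition the states into the consecutive blocks $\al{0}{F},\al{F}{F},\al{2F}{F},\ldots$, each of cardinality $F$, so that block $I$ is $\{IF,\ldots,(I+1)F-1\}$. The core step is to show that the $(I,J)$ block vanishes whenever $|I-J|\ge 2$. For $J\ge I+2$ the smallest column--minus--row difference between a state of block $I$ and a state of block $J$ is $(J-I-1)F+1\ge F+1>n-H$, the last inequality because $F\ge n-H$; such entries lie strictly to the right of the band and are $0$. Symmetrically, for $J\le I-2$ the difference is at least $F+1>H$ since $F\ge H$, so those entries fall to the left of the band and also vanish. Only the block diagonal and its two neighbours survive, which is precisely the shape of \eqref{eq:tridiagonal}.

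It then remains to check that the surviving interior blocks repeat and to name them. Repetition follows from the ``shift--right'' recursion: for $i\ge H$ every row is a one--step shift of the previous one, so the translation $(i,j)\mapsto(i+F,\,j+F)$ leaves the entries unchanged, forcing all interior diagonal blocks to coincide, all super--diagonal blocks to coincide, and all sub--diagonal blocks to coincide; the first block differs only in that it absorbs the boundary rows carrying the $b$ terms. Matching the repeated blocks against Definition~\ref{def:submatrix} then identifies the diagonal block with $A_1=P_{[\al{F}{F},\,\be{F}{F}]}$, the super-- and sub--diagonal blocks with $A_0$ and $A_2$, and the boundary block with $C=P_{[\al{0}{F},\,\be{0}{F}]}$, all square of order $F$. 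I expect the main obstacle to be this boundary bookkeeping rather than the band argument itself: one must verify that every row with a $b$ term sits inside $\al{0}{F}$ (where the hypothesis $H\le F$ enters), treat the two band edges separately because they have the unequal widths $H$ and $n-H$, and confirm that the block adjacent to $C$ already coincides with the generic $A_2$ so that the recursion starts correctly from the second block--row.
\end{proofarg}
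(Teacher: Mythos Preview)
The paper does not actually prove this theorem in the text; it only states ``From the properties of our transition matrix we can prove the following result~\cite{etfa2012}'' and defers to the cited conference paper. So there is no in-paper proof to compare against.

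Your band-width argument is exactly the natural way to establish the claim, and the computations are correct: for the recursive rows the non-zero entries obey $j-i\in[-H,\,n-H]$, the choice $F=\max\{H,\,n-H\}$ is the smallest block size for which $(J-I-1)F+1>n-H$ and $(I-J-1)F+1>H$ whenever $|I-J|\ge 2$, and $H\le F$ guarantees that all rows carrying the boundary terms $b_{H-i}$ lie in the first block. The observation that the shift recursion forces all interior blocks on the same (block) diagonal to coincide is also correct and is what gives the Toeplitz-like repetition in~\eqref{eq:tridiagonal}. Two cosmetic points worth noting, though they concern the theorem statement rather than your argument: the blocks are square of order $F$, not $H$ (your write-up already has this right), and the submatrix labels for $A_0$ and $A_2$ in the statement are swapped relative to the positions they occupy in~\eqref{eq:tridiagonal}; your sketch wisely leaves the identification generic (``super-- and sub--diagonal blocks with $A_0$ and $A_2$'') and so sidesteps the typo.
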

The structure of the QBDP exposed in Theorem~\ref{th:QBDP} enables the
application of efficient numeric solutions for the steady state
probability~\cite{bini2005numerical}, as discussed in
Section~\ref{sec:experiments}.

\subsection{A conservative approximation}
\label{sec:simplification}
In order to make the model tractable from the numeric point
of view, it is useful to
introduce a conservative approximation. 
The notion of conservative approximation that we shall adopt here
relies on the concept of {\em first order stochastic dominance}
(defining an order relation between probability distributions):
\begin{definition}
Given two random variables $X$ and $Y$, with CDFs $F_x(x)$ and $F_y(y)$,
$X$ has a first order stochastic dominance over $Y$ ($X \succeq Y$)
iff $\forall x$ $F_x(x)\leq F_y(x)$.
\label{def:Succ}
\end{definition}
Based on this definition, a stochastic real--time task can be seen as a conservative
approximation of another one if its probabilistic deadlines are
stochastically dominated by the probabilistic deadlines of the
original task: considering $\delta_k$ in Equation~\eqref{eq:vprob},
this plainly means that in the modified system the low values of the
$\delta_k$ will have a greater probability and so will be the
probability of the first element of the probability vector (associated
with the deadline satisfaction).

As shown by Diaz et al.~\cite{Dia04}, if $\mathcal{U}'$ stochastically dominates $\mathcal{U}$,
then a system having the execution times distributed according to
$\mathcal{U}'$ is a conservative approximation of the original system (with 
the execution times distributed according to $\mathcal{U}$).

A simple way to build $\mathcal{U}'$ to obtain such a conservative
approximation is to replace $c_k$
with a new variable $c'_k$ whose distribution is given by:
\begin{equation}
   U_\Delta (c') = \begin{cases} 
    0 & \text{if } c' \mod \Delta \neq 0\\
    \sum_{c=(k-1) \Delta + 1}^{k \Delta}U(c') &  \text{otherwise},
\end{cases}
\label{eq:uh}
 \end{equation}
where $\Delta$ is a scaling factor chosen as an integer sub--multiple of $Q^s$.
The transition matrix of the new DTMC has again the
structure in Fig.~\ref{fig:structure}, where the different 
elements of the matrix are functions of the parameter $\Delta$. 
Large values of $\Delta$ correspond to a smaller size for matrices
$A_2$, $A_1$, $A_0$ in Equation~\ref{eq:tridiagonal}. This reduces the
time required for the computation of the steady state probability
paying the price of a coarser approximation for the computed
probability.

\section{An analytical bound}
\label{sec:analytical}

This section presents an analytic solution for a QBDP described by the
transition matrix reported in Fig.~\ref{fig:structure}.
In the discussion, we assume that the conservative approximation discussed
in Section~\ref{sec:simplification} for some $\Delta$. 

The first key result of the Section is Theorem~\ref{th:Final}, which
shows a general expression for the steady state
probability of respecting the deadline. After introducing an
additional simplification in the model, this leads to the analytical
bound in Theorem~\ref{th:FinalClosedForm} and in
Corollary~\ref{cor:FinalClosedForm}, which represent the core theoretical 
results of
the paper.

\subsection{A solution for generic QBDP processes}
\label{sec:GeneralSolution}

  Before
going into the theoretic details, let us define the following function
$\gamma: \natu\times\real \rightarrow \real$ as
\[
\gm{k,l} = \sum_{j=0}^{k} \alpha_{j} l^{k-j} ,
\]
where $\alpha_j = a_j/a_0$. Using this function and the structure of
the QBDP, it is possible to write the equation expressing the
steady state equilibrium  $\tilde\pi_k = \tilde\pi_k P$, (where $\tilde \pi_k = \left[\tilde\pi_k^{(0)}, \tilde\pi_k^{(1)}, \ldots\right]$) by 
expressing the probabilities $\tilde\pi^{(i)}_k$, $i > H$, at time $k$
as a function of $\tilde\pi^{(j)}_k$, $0\leq j\leq H$, in the
following way:
  {\footnotesize
\begin{equation}
  \label{eq:RecursivePi}
\begin{aligned}
  \tilde{\pi}^{(H)}_k & = \sum_{j=H+1}^{n} \alpha_j \tilde{\pi}^{(0)}_k - \sum_{j=1}^{H-1} \gm{j,1} \tilde{\pi}^{(H-j)}_k, \\
  \tilde{\pi}^{(H+l)}_k & = \left (\gm{H-1,1} + \sum_{j=H+1}^{n} \alpha_j \right
  ) \tilde{\pi}^{(l)}_k -\!\!\!\!\!\! \sum_{\substack{ j=1\\ j\neq H}}^{min(n,l+H)}\!\!\!\!\!\!
  \alpha_{j} \tilde{\pi}^{(l+H-j)}_k,
\end{aligned}
\end{equation}}
holding for $\forall l > 1$.

The steady state solution for generic $n > H > 0$ is given by the
following theorem:
\begin{theorem}
\label{th:Final}
Consider a QBDP described by the transition probability matrix $P$
given in Fig.~\ref{fig:structure}, in which both $a_0$ and
$a_{n}$ differ from zero.

Assume that the matrix {\tiny
\begin{equation}
\label{eq:MatW}
W =
\begin{bmatrix}
 0 & 1 & 0 & \ldots & 0 & 0 & \ldots & 0 \\
 0 & 0 & 1 & \ldots & 0 & 0 & \ldots & 0 \\
 0 & 0 & 0 & \ddots & 0 & 0 & \ldots & 0 \\
 \vdots & \vdots & \vdots & \ddots & \ddots & \ddots & \ddots & \vdots
 \\
 0 & 0 & 0 & \ldots & 0 & 0 & \ldots & 1 \\
 -\alpha_n & -\alpha_{n-1} & -\alpha_{n-2} & \ldots & w & -\alpha_{H-1}
 &  \ldots & -\alpha_0 \\
 \end{bmatrix}
\end{equation} }
where $w = \gm{H-1,1} + \sum_{j=H+1}^{n} \alpha_j$, has distinct
eigenvalues. Let $\pi^{(j)} =\lim_{k\rightarrow +\infty} \tilde{\pi}^{(j)}_k$ be the steady state
distribution of the state. 
One of the two following cases apply:
\\
I)
if $\sum_{j=0}^{H-1} \gm{j,1} \leq \sum_{j=H+1}^{n}
(j-H) \alpha_j$ then the limiting distribution is given by:
\begin{equation}
  \label{eq:FinalResult1}
    \tilde{\pi}^{(j)} =\lim_{k\rightarrow +\infty} \tilde{\pi}^{(j)}_k = 0,\,\,\, \forall
  j,
\end{equation}
II) if $\sum_{j=0}^{H-1} \gm{j,1} > \sum_{j=H+1}^{n}
(j-H) \alpha_j$ then: 
\begin{equation}
  \label{eq:FinalResult2}
     \tilde{\pi}^{(0)} = \prod_{\beta\in \mathcal{B}_s}(1 - \beta) .
\end{equation}
In the second case, $\mathcal{B}_s$ is the set of stable eigenvalues of $W$ (in this context
an eigenvalue $\beta$ is said stable if $|\beta|<1$), and
the terms $\tilde{\pi}^{(j)}$ with $0 < j < H$ are known
linear functions of $\tilde{\pi}^{(0)}$, while the terms
$\tilde{\pi}^{(j)}$ with $j\geq H$ are given
by~\eqref{eq:RecursivePi}.
\end{theorem}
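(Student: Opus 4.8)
The plan is to turn the fixed-point equation $\tilde\pi = \tilde\pi P$ into a scalar linear recurrence in the state index and then read off its solution through the spectrum of $W$. First I would write the balance equations componentwise. Using the banded, shift-invariant structure of $P$ from Fig.~\ref{fig:structure}, the equation for the $j$-th component with $j$ in the ``recursive'' region (beyond the boundary rows carrying the $b$'s) becomes a constant-coefficient homogeneous recurrence $\tilde\pi^{(j)} = \sum_{k=0}^{n} a_k\,\tilde\pi^{(j+H-k)}$; solving for the top index reproduces exactly \eqref{eq:RecursivePi}, with coefficient $w = 1/a_0 - \alpha_H = \gamma_{H-1,1} + \sum_{j=H+1}^n\alpha_j$. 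The characteristic equation of this recurrence is precisely the one whose companion form is the matrix $W$ in \eqref{eq:MatW}, so its roots are the eigenvalues of $W$. Since each row of $P$ sums to one, $\beta = 1$ always solves the characteristic equation, and this root will play the decisive role in the case split.

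Second, under the distinct-eigenvalue hypothesis $W$ is diagonalisable, so the general solution of the recurrence is $\tilde\pi^{(j)} = \sum_{\beta} c_\beta\,\beta^{\,j}$, a superposition of geometric modes indexed by the eigenvalues. A stationary distribution must be an $\ell^1$ vector (it is a probability vector with $\sum_j\tilde\pi^{(j)} = 1$), hence summable and in particular bounded; this forces $c_\beta = 0$ for every mode with $|\beta|\ge 1$, leaving only the stable modes $\beta\in\mathcal B_s$. The surviving coefficients $c_\beta$, together with the overall scale, are then pinned down by the finitely many boundary balance equations (the rows with entries $b_H,\dots,b_1,a_0$) plus the relations \eqref{eq:RecursivePi} that express $\tilde\pi^{(j)}$, $j\ge H$, in terms of $\tilde\pi^{(0)},\dots,\tilde\pi^{(H-1)}$.

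Third comes the dichotomy, which I would obtain from a drift/stability argument anchored at the root $\beta = 1$. A short computation rewrites the hypothesis using $\gamma_{j,1} = \sum_{m=0}^j\alpha_m$, giving $\sum_{j=0}^{H-1}\gamma_{j,1} = \sum_{m=0}^{H-1}(H-m)\alpha_m$, so the test $\sum_{j=0}^{H-1}\gamma_{j,1}$ versus $\sum_{j=H+1}^{n}(j-H)\alpha_j$ is exactly a comparison of the ``downward'' and ``upward'' mean drift about the balance index $H$, i.e.\ the sign of the characteristic polynomial's derivative at $\beta = 1$. In Case~I the drift is non-negative: the chain is null-recurrent or transient, the mode $\beta = 1$ does not fall strictly inside the unit disk so it is excluded from $\mathcal B_s$, and the only summable solution compatible with the boundary equations is the trivial one, giving $\tilde\pi^{(j)} = 0$ for all $j$. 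In Case~II the drift is strictly negative: the chain is positive recurrent, the stable modes $\mathcal B_s$ carry a genuine normalisable solution, and it remains only to compute $\tilde\pi^{(0)}$.

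The main obstacle is the closed product form \eqref{eq:FinalResult2}. I would establish it through the generating function $\Pi(z) = \sum_{j\ge 0}\tilde\pi^{(j)} z^j$: multiplying the balance equations by $z^j$ and summing turns the bulk recurrence into multiplication by a reciprocal of the characteristic polynomial and collapses the boundary rows into a polynomial correction, so $\Pi(z)$ is rational with denominator $\prod_{\beta\in\mathcal B_s}(1 - \beta z)$, analytic on the closed unit disk precisely because the unstable modes were discarded. The normalisation $\Pi(1) = 1$ then fixes the free scale, and the value $\tilde\pi^{(0)} = \Pi(0)$ must be shown to simplify to $\prod_{\beta\in\mathcal B_s}(1-\beta)$. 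I expect this final algebraic collapse to be the crux: it requires showing that the numerator produced by the $b$-structured boundary rows, evaluated at the stable roots, conspires with a Vandermonde/partial-fraction identity so that the many boundary unknowns cancel and only the product over $\mathcal B_s$ survives, generalising the elementary $\tilde\pi^{(0)} = 1-\beta$ of a single-mode birth--death queue.
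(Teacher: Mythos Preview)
Your skeleton is sound and parallels the paper's: write the stationary equations as a linear recurrence in the state index, identify its characteristic polynomial with that of the companion matrix $W$, kill the non-summable modes (those with $|\beta|\ge 1$, including the ever-present root $\beta=1$), and use the boundary rows plus normalisation to fix what remains. Your drift reading of the Case~I/II split is also exactly what the paper does, via the Latouche--Ramaswami recurrence criterion applied to the block-tridiagonal form.

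Where you diverge is in the machinery for the final identity. The paper does \emph{not} pass to a generating function. Instead it works with the spectral decomposition $W=\sum_i\beta_i G_i$ and the iteration $\Pi_j=W^j\Pi_0$ on the stacked vector $\Pi_j=[\tilde\pi^{(j)},\dots,\tilde\pi^{(j+n-1)}]^T$. Two structural facts then do the work: (i) an explicit formula for $L_i\Pi_0$ shows it vanishes for $\beta_1=1$ and, by summability, for every unstable $\beta_i$; (ii) Rouch\'e's theorem gives that exactly $H$ eigenvalues satisfy $|\beta_i|\ge 1$, so the $H-1$ orthogonality conditions from the strictly unstable roots together with the normalisation form a square system in $\tilde\pi^{(0)},\dots,\tilde\pi^{(H-1)}$. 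Solving that system recursively yields $\tilde\pi^{(0)}=D_1/\prod_{\beta\in\mathcal B_s^\star}(1-\beta)$ with $D_1=\sum_{j=0}^{H-1}\gamma_{j,1}-\sum_{j=H+1}^n(j-H)\alpha_j$. The product formula \eqref{eq:FinalResult2} then follows from the purely algebraic identity $\prod_{i=2}^{n}(1-\beta_i)=D_1$, obtained by evaluating the factored characteristic polynomial at $\lambda=1$ after dividing out the root $\beta_1=1$.

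Two things are missing from your plan. First, you never count roots: without the Rouch\'e argument that $|\mathcal B_s|=n-H$ and $|\mathcal B_s^\star|=H-1$, you cannot argue that the boundary system is square, and your generating-function numerator will have the wrong number of undetermined constants. Second, the ``crux'' you flag is not a Vandermonde miracle but precisely the identity $\prod_{i\ne 1}(1-\beta_i)=D_1$; in a generating-function derivation this is what makes the numerator at $z=0$ match the denominator at $z=1$, and it has to be proved by the same characteristic-polynomial manipulation the paper carries out. Your route can be made to work, but it will need both of these ingredients, and at that point it is a repackaging of the paper's argument rather than a shortcut around it.
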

Before showing the proof, we make two important remarks.
\begin{remark}
The assumption on the eigenvalues of the matrix $W$ is
merely technical (it simplifies the proof of the result) and
it is not restrictive.  In all our examples (both synthetically
generated and using data from real applications), it is
respected. Artificial examples that violate it could probably be
constructed but they are not relevant in practice.
\end{remark}
\begin{remark}
\label{rem:companion}
As well as paving the way for Theorem~\ref{th:FinalClosedForm},
Theorem~\ref{th:Final} contains an implicit numeric algorithm for the
computation of $\tilde{\pi}^{(0)}$, based on the computation of the
eigenvalues of the matrix $W$. Since the latter is in companion form,
in the following we refer to this algorithm as {\tt companion}.
\end{remark}

\subsection{Proof of Theorem \ref{th:Final}}  
This section is devoted to the proof of the fundamental
Theorem~\ref{th:Final}, which will require several definitions and
auxiliary results.  
The section can be skipped over  if the reader 
is only interested in the applications of the Theorem.

The rationale behind the proof is the
following. First, the equilibrium point of the QBDP is expressed as an
iterative system. The evolution in the iteration step represents the
connection between the probabilities of the different states.  Using
this representation and some property of convergence of the Markov
chain, we can express all the steady--state probabilities as a
function of $\tilde{\pi}^{(0)}$, which can eventually be found
as a solution of a linear system of equations.

 We start noticing that having $a_0$ and $a_{n}$ different from zero
 implies that the Markov chain of the QBDP is irreducible and
 aperiodic.  Therefore, it is guaranteed that the probability of the
 different states converge to a value~\cite{CassandrasL06}. Notice,
 however, that this does not necessarily imply the existence of a
 steady--state distribution (the distribution could shift toward
 increasing values of the state without ever reaching the
 equilibrium, with the probability of each state going to $0$).

\paragraph{The case of Positive Recurrent QBDP}
If the QBDP is positive recurrent, it admits indeed a unique steady state
  distribution.  The first step of the proof is then to introduce the
  following vector: $\Pi_j =
  [\tilde{\pi}^{(j)},\dots,\tilde{\pi}^{(j+n-1)}]^T$, whose dimension is
  equal to $n$. It is possible to exploit~\eqref{eq:RecursivePi}
  and~\eqref{eq:MatW} to derive the equilibrium of the QBDP 
  by the following iterative equation for the vector $\Pi_j$:
  {\small
  \[
\begin{aligned}
\Pi_1 & = \begin{bmatrix}\tilde{\pi}^{(1)}\\\tilde{\pi}^{(2)}\\\vdots\\\tilde{\pi}^{(n)}
\end{bmatrix} = W\Pi_0 \Rightarrow \Pi_j = \begin{bmatrix}\tilde{\pi}^{(j)}\\\tilde{\pi}^{(j+1)}\\\vdots\\\tilde{\pi}^{(n-1+j)}
\end{bmatrix} = W^j \Pi_0 .
\end{aligned}
\]}
Using this notation the normalisation constraint $\sum_{h=0}^\infty \tilde{\pi}^{(h)} = 1$ can be expressed as
{\small
\begin{equation}
  \label{eq:EqDyn}
  \sum_{h=0}^\infty \tilde{\pi}^{(h)}=
  \begin{bmatrix}
  1 & 0 & 0 & \dots & 0
  \end{bmatrix} \sum_{i=0}^{+\infty} \Pi_i = 1.
\end{equation}}

The characteristic polynomial of the lower--left {\em companion form}
matrix $W$ reported in~\eqref{eq:MatW} is simply given by
\begin{equation}
  \label{eq:PolyAlpha}
  P(\lambda) = \lambda^n - \left (\gm{H-1,1} + \sum_{j=H+1}^{n} \alpha_j \right
  )\lambda^{n-H} +\sum_{\substack{j=1\\ j\neq
        H}}^{n} \alpha_{j} \lambda^{n-j} ,
\end{equation}
from which it is trivially derived that the matrix $W$ has one simple
eigenvalue in $\beta_1 = 1$ and additional $n-1$ eigenvalues
$\beta_i$. Therefore
\begin{equation}
  \label{eq:PolyBeta}
  P(\lambda) = (\lambda - 1)\prod_{i=2}^{n}(\lambda -
  \beta_i) .
\end{equation}
Since each $\beta_i$ verifies $P(\beta_i) = 0$, the following relation
holds
 {\small
\begin{equation}
  \label{eq:RelBetaPoly}
   \begin{aligned}
& \beta_i^n - \left (\gm{H-1,1} + \sum_{j=H+1}^{n} \alpha_j \right
  )\beta_i^{n-H} +\sum_{\substack{j=1\\
      j\neq H}}^{n} \alpha_{j}
  \beta_i^{n-j} = 0 \Rightarrow \\ 
    & \gm{H-1,1} + \sum_{j=H+1}^{n} \alpha_j = \beta_i
    \gm{H-1,\beta_i} + \frac{\sum_{j=H+1}^{n} \alpha_j \beta_i^{n-j}}{\beta_i^{n-H}} .
  \end{aligned}
\end{equation}}

Since all the eigenvalues are assumed simple, we can use of the {\em
  spectral decomposition} of the matrix $W$: $W = \sum_{i=0}^{n-1}
\beta_i G_i$, where the {\em spectral projectors} $G_i$ are given by
$
G_i = \frac{V_{i} L_{i}}{L_{i}V_{i}} = N_i V_{i} L_{i} ,
$
and $L_i$ and $V_i$ are respectively the left and right eigenvectors
associated with the $i$--th eigenvalue $\beta_i$. $N_i$ is the
normalisation constant needed to satisfy the spectral projectors basic
properties, i.e., $G_iG_j = 0$ for $i\neq j$ and $G_iG_i = G_i$.  As a
consequence,
$
\Pi_1 = W \Pi_0 = \sum_{i=1}^{n} \beta_i G_i \Pi_0 ,
$
and, in general,
  {\small
\begin{equation}
  \label{eq:SumGi}
\Pi_j = W^j \Pi_0 = \sum_{i=1}^{n} \beta_i^j G_i \Pi_0 = \sum_{i=1}^{n} \beta_i^j N_i V_{i} L_{i} \Pi_0 .
\end{equation}}
Therefore, by combining~\eqref{eq:SumGi} and~\eqref{eq:EqDyn}, one gets:
  {\small
\begin{equation}
  \label{eq:NormProb}
  \sum_{i=1}^{n}\sum_{k=0}^{+\infty} \beta_i^k v_i^{(0)} N_i L_{i}
  \Pi_0 = 1,
\end{equation}}
where $v_i^{(0)}$ is the first element of the right eigenvector.
Given the expression of the matrix $W$, the left $L_i$ and right $V_i$
can be easily found as a function of $\beta_i$.
From the expression of the eigenvectors, it follows immediately that
  {\small
\begin{equation}
  \label{eq:NFactor}
\begin{aligned} 
  N_i & = \frac{1}{L_i V_i} = \frac{\displaystyle \beta_i^{n}}{\displaystyle \sum_{j=0}^{H-1}
    \gm{j,\beta_i} \beta_i^{n-j} -\sum_{j=H+1}^{n} (j-H) \alpha_j \beta_i^{n-j}}
  .
\end{aligned}
\end{equation}}
We now state some auxiliary propositions on vector $\Pi_0$.

\begin{proposition}
\label{prop:EigenOrtho}
The product between the left eigenvector $L_i$ and the 
initial vector of the iteration $\Pi_0$ is given by
{\small
\[
L_i \Pi_0 = \beta_i^{n-H-1} \left ( \beta_i - 1 \right ) \left (
  \sum_{k=0}^{H-1} \sum_{j=k}^{H-1} \gm{H-1-j,\beta_i} \tilde{\pi}^{(k)} \right ).
  \]}
\end{proposition}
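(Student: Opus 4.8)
The plan is to compute the product $L_i \Pi_0$ directly, by first making the left eigenvector $L_i$ explicit and then collapsing the resulting sum with the help of the eigenvalue identity~\eqref{eq:RelBetaPoly}. Since $W$ in~\eqref{eq:MatW} is in companion form, I would first solve $L_i W = \beta_i L_i$ column by column: reading the equation off the single nonzero entry of each of the first $n-1$ columns gives a one-step recurrence for the components of $L_i$, whose solution is the coefficient sequence of the quotient polynomial $P(\lambda)/(\lambda-\beta_i)$ (i.e.\ synthetic division of $P$ by $(\lambda-\beta_i)$). Equivalently, the $m$-th component is a \emph{high} partial Horner sum in the coefficients $\alpha_t$ of~\eqref{eq:PolyAlpha} and powers of $\beta_i$; I expect these to be expressible through the $\gamma$ function after reindexing.

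Next I would form $L_i\Pi_0 = \sum_{k=0}^{n-1} (L_i)_{k+1}\,\tilde\pi^{(k)}$, recalling that $\Pi_0=[\tilde\pi^{(0)},\dots,\tilde\pi^{(n-1)}]^T$. The right-hand side of the claim involves only the ``free'' probabilities $\tilde\pi^{(0)},\dots,\tilde\pi^{(H-1)}$, so the components $\tilde\pi^{(H)},\dots,\tilde\pi^{(n-1)}$ must be eliminated using the recursion~\eqref{eq:RecursivePi}, which expresses every $\tilde\pi^{(H+l)}$ as a linear combination of strictly lower-indexed states. After this substitution I would regroup the whole expression as $\sum_{k=0}^{H-1} c_k(\beta_i)\,\tilde\pi^{(k)}$ and compute the coefficient $c_k(\beta_i)$ of each free probability.

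The decisive step is the simplification of $c_k(\beta_i)$. Here I would use that $\beta_i$ is a root of $P$, i.e.\ the identity~\eqref{eq:RelBetaPoly} (together with $P(\lambda)=(\lambda-1)\prod_{i\ge 2}(\lambda-\beta_i)$ from~\eqref{eq:PolyBeta}, so that $\lambda=1$ is always a root). This relation converts the ``high'' Horner sums coming from $L_i$ into the ``low'' truncated sums $\gm{H-1-j,\beta_i}$ and is what factors out the common term $\beta_i^{n-H-1}(\beta_i-1)$; the leftover is exactly the double sum $\sum_{j=k}^{H-1}\gm{H-1-j,\beta_i}$, yielding the stated closed form.

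I expect the main obstacle to be the index bookkeeping in the two middle steps: correctly matching the components of $L_i$ against the terms produced by the recursion~\eqref{eq:RecursivePi}, and then applying $P(\beta_i)=0$ in the right places so that the telescoping collapses to a single double sum. The presence of the factor $(\beta_i-1)$ is the structural hint guiding this collapse, since it signals that the root $\beta_1=1$ of~\eqref{eq:PolyBeta} is being divided out along the way; arranging the powers of $\beta_i$ and the summation limits so that precisely $\beta_i^{n-H-1}(\beta_i-1)$ survives is the delicate part.
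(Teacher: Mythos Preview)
Your proposal is correct and follows essentially the same route as the paper: compute $L_i\Pi_0$ from the companion-form eigenvector, eliminate $\tilde\pi^{(H)},\dots,\tilde\pi^{(n-1)}$ via the recursion~\eqref{eq:RecursivePi}, and simplify using the root identity~\eqref{eq:RelBetaPoly}. The one concrete ingredient the paper singles out that you leave implicit is the elementary factorisation $\beta_i^{m}-1=(\beta_i-1)\sum_{j=0}^{m-1}\beta_i^{j}$, which is precisely the algebraic device that extracts the common factor $(\beta_i-1)$ in the final step.
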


\begin{IEEEproof}
  The proof of the proposition follows by first computing the explicit
  computation of the product $L_i \Pi_0$, in which each term is
  substituted with the recursive Equations~\eqref{eq:RecursivePi} and
  the constraint given in~\eqref{eq:RelBetaPoly}, and then noticing
  that 
    {\small
  \[
 \beta_i^n - 1 = (\beta_i - 1) \sum_{j=0}^{n-1}
    \beta_i^j . 
  \]}
  See~\cite{TR} for more details.

\end{IEEEproof}

\begin{proposition}
\label{prop:beta1}
  The initial vector $\Pi_0$ is orthogonal to the left eigenvector
  associated to $\beta_1 = 1$.
\end{proposition}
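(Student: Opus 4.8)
The plan is to obtain Proposition~\ref{prop:beta1} as an immediate specialisation of Proposition~\ref{prop:EigenOrtho}. That proposition already furnishes, for every eigenvalue $\beta_i$ of $W$, the closed expression
\[
L_i \Pi_0 = \beta_i^{n-H-1} \left ( \beta_i - 1 \right ) \left ( \sum_{k=0}^{H-1} \sum_{j=k}^{H-1} \gm{H-1-j,\beta_i} \tilde{\pi}^{(k)} \right ),
\]
and its derivation rests only on the recursive Equations~\eqref{eq:RecursivePi} together with the polynomial constraint~\eqref{eq:RelBetaPoly}, that is, on the fact that $\beta_i$ is a root of the characteristic polynomial $P(\lambda)$. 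Since $\beta_1 = 1$ is itself a (simple) root of $P(\lambda)$, as recorded in~\eqref{eq:PolyBeta}, and since the left eigenvector $L_i$ is written explicitly as a function of $\beta_i$, the very same formula remains valid for the eigenvalue $\beta_1 = 1$ and its associated $L_1$.

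First I would therefore substitute $\beta_i = \beta_1 = 1$ directly into the expression above. The three factors then become $1^{n-H-1} = 1$, the scalar $(\beta_1 - 1) = 0$, and the double sum $\sum_{k=0}^{H-1} \sum_{j=k}^{H-1} \gm{H-1-j,1} \tilde{\pi}^{(k)}$, which is a finite combination of the steady--state probabilities $\tilde{\pi}^{(k)}$ with finite coefficients $\gm{H-1-j,1}$. The product of a bounded quantity with the vanishing factor $(\beta_1-1)$ is hence genuinely zero, so that $L_1 \Pi_0 = 0$, which is exactly the claimed orthogonality between $\Pi_0$ and the left eigenvector associated with $\beta_1 = 1$.

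The only point requiring a moment's care --- and the closest thing here to an obstacle --- is to confirm that the cofactors multiplying $(\beta_1-1)$ stay bounded at $\beta = 1$, so that the product is honestly $0$ rather than an indeterminate $0 \cdot \infty$ form. This is immediate: by its definition $\gm{k,l}$ is a polynomial in $l$, hence finite at $l = 1$, and the outer sum runs over the finitely many indices $k \in \{0,\dots,H-1\}$ whose probabilities $\tilde{\pi}^{(k)}$ are finite in the positive recurrent regime under consideration. Consequently no delicate limiting argument is needed, and the statement follows at once from Proposition~\ref{prop:EigenOrtho}. The usefulness of this seemingly trivial fact lies downstream: it shows that the unstable mode $\beta_1 = 1$ contributes nothing to the expansion~\eqref{eq:SumGi} of $\Pi_j$, which is precisely what will allow the normalisation~\eqref{eq:NormProb} to converge and to pin down $\tilde{\pi}^{(0)}$.
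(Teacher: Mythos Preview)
Your proposal is correct and follows exactly the paper's own route: the paper's proof is the single line ``The proof follows from Proposition~\ref{prop:EigenOrtho},'' and you have simply spelled out the obvious substitution $\beta_i = 1$ into that formula, observing that the factor $(\beta_i - 1)$ vanishes while the remaining cofactor is a finite polynomial expression. Your extra remark ruling out a $0\cdot\infty$ indeterminacy is a harmless (and correct) elaboration of what the paper leaves implicit.
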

\begin{IEEEproof}
  The proof follows from Proposition~\ref{prop:EigenOrtho}.
\end{IEEEproof}

\begin{proposition}
\label{prop:UnstableEigen}
For any unstable eigenvalue $\beta_i$ (i.e., such that $|\beta_i| > 1$) of $W$ it holds that $L_i \Pi_0 = 0$.
\end{proposition}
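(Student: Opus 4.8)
The plan is to bypass the full spectral machinery and argue directly from the defining property of $L_i$ as a left eigenvector of $W$, combined with the fact that in the positive recurrent case the entries of every $\Pi_j$ are genuine probabilities and are therefore uniformly bounded. The observation that unstable modes must grow geometrically, while the probability vector cannot, is what forces the corresponding coefficient to vanish.

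First I would recall that $L_i$ satisfies $L_i W = \beta_i L_i$ by construction, hence $L_i W^j = \beta_i^j L_i$ for every $j \geq 0$. Combining this with the iteration $\Pi_j = W^j \Pi_0$ already established for the positive recurrent case yields the key identity
\[
L_i \Pi_j = L_i W^j \Pi_0 = \beta_i^j \, (L_i \Pi_0) .
\]
This isolates the contribution of the single eigenvalue $\beta_i$ and decouples it from all the other modes, so that no appeal to the spectral decomposition~\eqref{eq:SumGi} or to the orthogonality of the projectors is needed for this particular statement.

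Next I would observe that each component of $\Pi_j = [\tilde{\pi}^{(j)},\dots,\tilde{\pi}^{(j+n-1)}]^T$ is a steady-state probability, so that $0 \leq \tilde{\pi}^{(h)} \leq 1$ for all $h$; since $\sum_{h} \tilde{\pi}^{(h)} = 1$ in the positive recurrent case, the entries in fact tend to $0$ and the sequence $\{\Pi_j\}$ is bounded. Consequently $|L_i \Pi_j| \leq \|L_i\|\,\|\Pi_j\|$ remains bounded in $j$. On the other hand, the identity above gives $|L_i \Pi_j| = |\beta_i|^j\,|L_i \Pi_0|$. If $\beta_i$ is unstable, i.e. $|\beta_i| > 1$, and we had $L_i \Pi_0 \neq 0$, the right-hand side would diverge as $j \to \infty$, contradicting the boundedness of the left-hand side. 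Hence $L_i \Pi_0 = 0$ for every unstable eigenvalue, which is exactly the claim. Note that the modulus argument goes through unchanged when $\beta_i$ (and therefore $L_i$) is complex, since the $\Pi_j$ are real and bounded.

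The algebra here is entirely routine; the only point demanding care is the uniform boundedness of $\{\Pi_j\}$, and this is precisely where the positive recurrence hypothesis enters, guaranteeing a normalisable limiting distribution with entries in $[0,1]$. I therefore expect the main (minor) obstacle to be making this boundedness explicit and confirming it holds uniformly in $j$ rather than the eigenvalue manipulation itself. It is worth remarking that the distinctness assumption on the eigenvalues of $W$ is not actually invoked in this proposition: it will only be needed later, when reconstructing $\tilde{\pi}^{(0)}$ from the surviving stable modes.
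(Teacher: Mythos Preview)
Your argument is correct and is in fact cleaner than the one given in the paper. The paper routes the same contradiction through the normalisation identity~\eqref{eq:NormProb}, which is obtained from the full spectral decomposition~\eqref{eq:SumGi}: the sum $\sum_k \beta_i^k v_i^{(0)} N_i L_i \Pi_0$ must converge, so for $|\beta_i|>1$ one of $N_i$, $\Pi_0$, or $L_i\Pi_0$ must vanish, and the first two are then ruled out separately (the case $\Pi_0=0$ by invoking positive recurrence again). Your direct computation $L_i\Pi_j = \beta_i^j\, L_i\Pi_0$ isolates each mode without any appeal to the projector decomposition or to the simultaneous behaviour of the other eigenvalues, and the case $\Pi_0=0$ does not even need separate treatment since the conclusion $L_i\Pi_0=0$ then holds trivially. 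The paper's approach has the practical advantage that it reuses machinery already set up for the subsequent computation of $\tilde\pi^{(0)}$, but as a self-contained proof of this proposition your boundedness argument is more elementary and, as you note, does not depend on the eigenvalues being distinct.
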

\begin{proof}
  If the QBDP has an equilibrium then~\eqref{eq:NormProb} holds true.
  The unitary eigenvalue $\beta_1 = 1$ does not play any role in the
  summation of~\eqref{eq:NormProb} in view of
  Proposition~\ref{prop:beta1}.  Next, suppose that there exists one
  or more $|\beta_i| > 1$.  From Equation~\eqref{eq:NormProb} it
  follows that it may be $L_i \Pi_0 = 0$, $N_i = 0$ or $\Pi_0 = 0$.
  Since the normalisation factor cannot be null, let us first consider
  $\Pi_0 = 0$.  Using~\eqref{eq:SumGi} it follows that $\Pi_0 = 0
  \Rightarrow \Pi_j = 0$, $\forall j$.  Therefore,
    {\small
  \[
  \tilde{\pi}^{(j)} =\lim_{k\rightarrow +\infty} \tilde{\pi}^{(j)}(k) = 0,\,\,\,
  \forall j ,
  \]}
  and, since the Markov chain is irreducible and aperiodic, the QBDP
  does not have a unique stationary distribution~\cite{CassandrasL06},
  which contradicts the hypothesis.

  It then follows that for any unstable eigenvalue $L_i \Pi_0 = 0$.
\end{proof}

From Rouche's theorem~\cite{Lloyd79} we have that the number of
eigenvalues $\beta_i$ such that $|\beta_i| \geq 1$ of the matrix $W$
is exactly equal to $H$, where $H-1$ have $|\beta_i| > 1$.  The
consequences of Proposition~\ref{prop:UnstableEigen} are twofold.
First, it states that Proposition~\ref{prop:EigenOrtho} defines $H-1$
linear equations
 {\small
\begin{equation}
  \label{eq:RecEq2}
\sum_{k=0}^{H-1} \sum_{q_1=0}^{H-1-k} \gm{q_1, \beta_i} \tilde{\pi}^{(k)} = 0
, \forall \beta_i\in \mathcal{B}_s^\star,
  \end{equation}}
where $\mathcal{B}_s^\star$ is the set of $H-1$ unstable eigenvalues
except $\beta_1 = 1$ (the unstable eigenvalue $\beta_1$ does not play
any role by Proposition~\ref{prop:beta1}).  The $H$ unknown
probabilities $\tilde{\pi}^{(0)}$ to $\tilde{\pi}^{(H-1)}$
of~\eqref{eq:RecEq2} are also the unknowns of the recursion
formulae~\eqref{eq:RecursivePi}. The second consequence is that
  {\small
\begin{equation}
  \label{eq:NormProb2}
  \sum_{\beta_i\in\mathcal{B}_s}\frac{v_i^{(0)} N_i}{1 - \beta_i} L_{i}
  \Pi_0 = 1 ,
\end{equation}}
where $\mathcal{B}_s$ is the set of stable eigenvalues.  By
substituting in~\eqref{eq:NormProb2} the result given in
Proposition~\ref{prop:EigenOrtho} and the expression of the right eigenvector $L_i$, we get
 {\small
\begin{equation}
  \label{eq:NormProb3}
  - \sum_{\beta_i\in\mathcal{B}_s}\frac{N_i}{\beta_i^H}
  \sum_{k=0}^{H-1} \sum_{q_1=0}^{H-1-k} \gm{q_1,\beta_i} \tilde{\pi}^{(k)} = 1 .
\end{equation}
}
By means of Proposition~\ref{prop:UnstableEigen}, the summation can be
extended to the unstable eigenvalues, except for the first eigenvalue
$\beta_1 = 1$, which instead induces indefiniteness
of~\eqref{eq:NormProb3}.  The solution to~\eqref{eq:NormProb3} is
derived exploiting the spectral projectors property $\sum_{i=1}^n G_i
= I_{n}$.  Indeed, summing the elements in position $(n-H,n-j)$, for
$1\leq j\leq H-1$, we have for each $j$
{\small
\[
- \sum_{i=1}^{n} N_i v_i^{(n-H)} l_i^{(n-j)} = - \sum_{i=1}^{n}
\frac{N_i}{\beta_i^H} \gm{j,\beta_i} = 0,
\]}
and hence
{\small
\[
- \sum_{i=2}^{n} \frac{N_i}{\beta_i^H} \gm{j,\beta_i} = N_1 \gm{j, 1} ,
\]}
where $N_1$ is easily obtained by~\eqref{eq:NFactor} for $\beta_1 =
1$, i.e.,
{\small 
\[
N_1 = \frac{\displaystyle 1}{\displaystyle \sum_{j=0}^{H-1} \gm{j,1}
  -\sum_{j=H+1}^{n} (j-H) \alpha_j } = \frac{1}{D_1}.
\]}
Moreover, for the elements in position $(n-H+1,1)$, we get
{\small
\[
- \sum_{i=1}^{n} N_i v_i^{(n-H+1)} l_i^{(1)} = \sum_{i=1}^{n}
\frac{N_i}{\beta_i^{H-1}} \frac{\alpha_n}{\beta_i} = 0 \Rightarrow -
\sum_{i=2}^{n} \frac{N_i}{\beta_i^H} = N_1 .
\]}
Substituting these relations in~\eqref{eq:NormProb3} produces the
equation
  {\small
\begin{equation}
  \label{eq:RecEq1}
\sum_{k=0}^{H-1} \sum_{q_1=0}^{H-1-k} \gm{q_1, 1} \tilde{\pi}^{(k)} = D_1 ,
\end{equation}}
which, used in conjunction with the $H-1$ equations
of~\eqref{eq:RecEq2}, determines the set of unknown probabilities.

In order to have an analytic solution of this linear system of $H$
equations in $H$ unknowns, we start by collecting the probability with
the highest index, i.e.,
{\small
\[
\begin{aligned}
  \tilde{\pi}^{(H-1)} & + \sum_{k=0}^{H-2} \sum_{q_1=0}^{H-1-k} \gm{q_1, 1}
  \tilde{\pi}^{(k)} = D_1 \\
  \tilde{\pi}^{(H-1)} & + \sum_{k=0}^{H-2} \sum_{q_1=0}^{H-1-k} \gm{q_1,
    \beta_i} \tilde{\pi}^{(k)} = 0, \beta_i\in\mathcal{B}_s^\star,
\end{aligned}
\]}
from which it is possible to immediately have the solution
{\small
\[
\tilde{\pi}^{(H-1)} = - \sum_{k=0}^{H-2} \sum_{q_1=0}^{H-1-k} \gm{q_1,
  \beta_H}
\tilde{\pi}^{(k)}
\]}
and the $H-1$ new linear equations in $H-1$ unknowns
{\small
\[
\sum_{k=0}^{H-2} \sum_{q_1=0}^{H-1-k} \left (\gm{q_1, 1} - \gm{q_1,
    \beta_i} \right ) \tilde{\pi}^{(k)} = D_1, \beta_i\in\mathcal{B}_s^\star ,
\]}
that, by simple algebraic manipulations, leads to
{\small
\[
\sum_{k=0}^{H-2} \sum_{q_1=0}^{H-1-k} \sum_{q_2 = 0}^{q_1-1}
\gm{q_2, \beta_i} \tilde{\pi}^{(k)} = \frac{D_1}{1 - \beta_i},
\beta_i\in\mathcal{B}_s^\star.
\]}
From the new set of $H-1$ equations the element
$\tilde{\pi}^{(H-2)}$ can be collected, thus leading to a recursive
solution formula.
The recursion can be executed for $H$ steps until the following final
equation is obtained
  {\small
\begin{equation}
  \label{eq:Pi0Unst}
\tilde{\pi}^{(0)} = \frac{\displaystyle D_1}{\displaystyle
  \prod_{\beta_i\in\mathcal{B}_s^\star} (1 - \beta_i)} = \frac{\displaystyle \sum_{j=0}^{H-1} \gm{j,1}
  -\sum_{j=H+1}^{n} (j-H) \alpha_{j}}{\displaystyle
  \prod_{\beta_i\in\mathcal{B}_s^\star} (1 - \beta_i)} .
\end{equation}}

The result in~\eqref{eq:Pi0Unst} can be suitably rewritten in a more
useful way.  
To this end, we first rewrite the characteristic
polynomial~\eqref{eq:PolyBeta} as follows
  {\small
\begin{equation}
  \label{eq:Prop3Eq}
  P(\lambda) = (\lambda -
  1)\prod_{i=2}^{n-1}(\lambda - \beta_i) = \lambda^{n-1} + \sum_{j=1}^{n-1}
  \mathcal{S}_j(\beta) \lambda^{j-1} ,
\end{equation}}
where
 {\small
\begin{equation}
  \label{eq:Prop3Eq2}
  \begin{aligned}
    & \mathcal{S}_j(\beta) = (-1)^{n-j+1} \left (
      \sum_{J\in\mathcal{C}_{1}}\prod \beta_J +
      \sum_{J\in\mathcal{C}_{2}}\prod \beta_J \right ) ,
  \end{aligned}
\end{equation}}
and where $\mathcal{C}_{1}$ and $\mathcal{C}_{2}$ are proper sets of
indices coming from the explicit computation of the characteristic
polynomial. Since the product of all the eigenvalues, except for the
first one, is given by
{\small
\[
\prod_{i=2}^{n} (1 - \beta_i) = 1 + \sum_{j=1}^{n-1}
(-1)^{n-j}\sum_{J\in\mathcal{C}_{n-j}}\prod \beta_J = 1 +
\sum_{j=1}^{n-1} \mathcal{W}_j(\beta) ,
\]}
where, by means of~\eqref{eq:Prop3Eq2}, $\mathcal{W}_k(\beta) = -
\sum_{j=1}^{k} \mathcal{S}_j(\beta)$, one gets
{\small
\begin{equation}
  \label{eq:RecursiveProd}
\prod_{i=2}^{n} (1 - \beta_i) = 1 - \sum_{j=1}^{n-H} \sum_{k=1}^{j}
\mathcal{S}_k(\beta) - \sum_{j=n-H+1}^{n-1} \sum_{k=1}^{j}
\mathcal{S}_k(\beta) .
\end{equation}}
From~\eqref{eq:Prop3Eq} and~\eqref{eq:PolyAlpha},
$\mathcal{S}_k(\beta) = \alpha_{n-k+1}$, for $1\leq k\leq n$, and
$\mathcal{S}_k(\beta) = \gm{H-1,1} + \sum_{j=H+1}^{n} \alpha_j$, for
$k=n-H+1$.  Substituting these relations in the last two terms
of~\eqref{eq:RecursiveProd}, one gets
{\small
\[
\begin{aligned}
& - \sum_{j=1}^{n-H} \sum_{k=1}^{j} \mathcal{S}_k(\beta) =
- \sum_{j=H+1}^{n}(j-H)\alpha_j ,\\
& - \sum_{j=n-H+1}^{n-1} \sum_{k=1}^{j} \mathcal{S}_k(\beta) = (H-1)
\gm{H-1,1} - \sum_{j=1}^{H-1} (j-1) \alpha_j .
\end{aligned}
\]}
Since
{\small
\[
1 + (H-1) \gm{H-1,1} - \sum_{j=1}^{H-1} (j-1) \alpha_j =
\sum_{j=0}^{H-1} \gm{j,1} ,
\]}
Equation~\eqref{eq:RecursiveProd} is rewritten as
{\small
\begin{equation}
  \label{eq:EigProdSimply}
  \prod_{i=2}^{n} (1 - \beta_i) = \sum_{j=0}^{H-1} \gm{j,1} -
  \sum_{j=H+1}^{n}(j-H)\alpha_j = D_1 ,
\end{equation}}
that substituted in~\eqref{eq:Pi0Unst} finally yields
Equation~\eqref{eq:FinalResult2}.

At this point we have proved that \emph{if the QBDP has an
  equilibrium}, this is given by~\eqref{eq:FinalResult2}, by the recursive
solution of the linear system of equations~\eqref{eq:RecEq1}
and~\eqref{eq:RecEq2}, and by the recursion
formula~\eqref{eq:RecursivePi}.

\paragraph{The case of non--positive recurrent QBDP}
If the QBDP is not positive recurrent we can re--write matrix $P$ using
its block--tridiagonal representation in~\eqref{eq:tridiagonal}.
We can immediately apply the following theorems.

\begin{theorem}
  \label{th:StatDistr}
  \cite{CassandrasL06} An irreducible Markov chain has a stationary
  distribution if and only if all its states are positive recurrent.
\end{theorem}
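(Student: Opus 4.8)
The plan is to establish the two implications separately, first reducing the problem to a single reference state by exploiting irreducibility: since in an irreducible chain all states share the same type (as recalled in the background of Section~\ref{sec:model}), it suffices to decide positive recurrence for one state, and once a single invariant probability vector is produced, irreducibility forces it to be \emph{the} stationary distribution.

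For the direction asserting that a stationary distribution \emph{implies} positive recurrence, I would argue against the chain being transient or null recurrent. Starting from a stationary $\pi$ with $\pi = \pi P$, iterating gives $\pi_j = \sum_i \pi_i\, p_{i,j}^{(k)}$ for every $k$, where $p_{i,j}^{(k)}$ is the $k$-step transition probability. Averaging this identity over $k = 1, \dots, n$ and letting $n \to \infty$, I would invoke the Ces\`aro form of the ergodic theorem for irreducible chains, $\frac{1}{n}\sum_{k=1}^{n} p_{i,j}^{(k)} \to 1/M_j$, with the convention $1/M_j = 0$ when state $j$ is transient or null recurrent. Dominated convergence, legitimate because the weights $\pi_i$ are summable, then yields $\pi_j = 1/M_j$ for all $j$. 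Since $\sum_j \pi_j = 1$, at least one $\pi_j$ is strictly positive, so the corresponding $M_j$ is finite; by the common--type property every $M_j$ is finite and all states are positive recurrent.

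For the converse, assuming all states positive recurrent, I would construct an invariant measure by a regeneration argument. Fixing a reference state $0$, I define $\nu_j$ as the expected number of visits to $j$ during one excursion from $0$ back to $0$, i.e.\ strictly before the first return time $\mathcal{T}_0$. The key step is to verify $\nu P = \nu$ by conditioning each visit to $j$ on the state occupied one step earlier and applying the strong Markov property at the return time; one also checks $\nu_0 = 1$ and $\sum_j \nu_j = M_0 = \Exp{\mathcal{T}_0}$. Positive recurrence gives $M_0 < \infty$, so $\pi_j = \nu_j / M_0$ is a well--defined stationary \emph{probability} distribution.

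The main obstacle I anticipate is analytic rather than combinatorial: justifying the interchange of limit and infinite sum in the forward direction, and, in the converse, establishing that the excursion measure $\nu$ is summable and genuinely invariant when the state space is countably infinite, as it is for the QBDP treated here. These are precisely the points where the infinite--dimensional nature of the chain must be handled with care; on a finite state space both would be immediate.
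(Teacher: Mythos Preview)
Your argument is the standard textbook proof of this classical fact and is essentially correct, including the care you take with the Ces\`aro--averaging/dominated--convergence step and with the summability of the excursion measure in the countable case. However, there is nothing in the paper to compare it against: Theorem~\ref{th:StatDistr} is not proved in the paper at all. It is quoted verbatim from~\cite{CassandrasL06} and invoked as a black box in the analysis of the non--positive--recurrent case of Theorem~\ref{th:Final}. The paper's only use of it is the immediate consequence that if some state fails to be positive recurrent then no stationary distribution exists, which feeds into the dichotomy between cases~I and~II.

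So your proposal is not wrong, but it is answering a question the paper deliberately outsources. If your goal is to match the paper, the appropriate ``proof'' here is simply the citation; if your goal is to supply the omitted argument for completeness, what you have written is adequate and aligned with the treatment in standard references such as~\cite{CassandrasL06}.
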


\begin{definition}
  Assume $A = A_0 + A_1 + A_2$ is irreducible. Then, by the
  Perron--Frobenius Theorem, there exists a unique vector $\mu > 0$
  with ${\bf 1}^T \mu = 1$ and $A \mu = \mu$.  The vector $\mu$ is
  called the stationary probability vector of $A$, while ${\bf 1}$ is
  a column vector whose elements are all equal to one.
\end{definition}

\begin{theorem}
  \label{th:Transient}
  \cite{LatoucheR87} The QBDP is transient if ${\bf 1}^T A_0 \mu <
  {\bf 1}^T A_2 \mu$, null recurrent if ${\bf 1}^T A_0 \mu = {\bf 1}^T
  A_2 \mu$ and positive recurrent if ${\bf 1}^T A_0 \mu > {\bf 1}^T
  A_2 \mu$.
\end{theorem}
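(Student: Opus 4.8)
The plan is to establish the trichotomy as the classical \emph{mean-drift classification} for quasi--birth--death processes, combining the matrix-analytic characterisation of recurrence through the $G$-matrix with a Foster--Lyapunov drift computation that fixes the sign of the inequalities. By the block definitions of Theorem~\ref{th:QBDP} I read $A_0$ as the block collecting the one-step transitions that move the level \emph{towards} the boundary and $A_2$ as the block moving it away, while $A_1$ leaves the level unchanged. First I would introduce the matrix $G$, whose entry $(i,j)$ is the probability that, started in a repeating level in phase $i$, the first visit to the level immediately below happens in phase $j$. A one-step first-passage decomposition yields the quadratic fixed-point equation $G = A_0 + A_1 G + A_2 G^2$, of which $G$ is the minimal nonnegative solution, and the standard reduction I would invoke is that the chain is recurrent exactly when $G$ is stochastic ($G\mathbf{1}=\mathbf{1}$) and transient exactly when $G$ is strictly substochastic. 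Together with Theorem~\ref{th:StatDistr} this separates transience from recurrence and reduces the remaining task to deciding positive versus null recurrence.

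To pin down the precise inequalities I would run a drift argument on the state $(n,\phi)$, where $n$ is the level and $\phi$ the phase, using the linear-plus-correction Lyapunov function $V(n,\phi)=n+f_\phi$. Here $f$ is chosen to solve the Poisson equation $(I-A)f = e + \theta\,\mathbf{1}$, with $e_\phi=[A_2\mathbf{1}]_\phi-[A_0\mathbf{1}]_\phi$ the phase-dependent mean increment of the level and $\theta=\mathbf{1}^T A_0\mu-\mathbf{1}^T A_2\mu$ its stationary average against $\mu$. Since $A=A_0+A_1+A_2$ is irreducible and stochastic, Perron--Frobenius furnishes the positive stationary vector $\mu$, and the Fredholm solvability condition for $f$ is precisely that the right-hand side averages to zero against $\mu$, which holds by the very definition of $\theta$. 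The purpose of this construction is that, for every level $n$ above the finite boundary region governed by $C$, the one-step expected increment of $V$ collapses to the constant $-\theta$, uniformly in the phase $\phi$.

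With this uniform drift in hand the three cases follow from standard criteria. If $\theta>0$, i.e. $\mathbf{1}^T A_0\mu>\mathbf{1}^T A_2\mu$, the drift of $V$ equals $-\theta<0$ outside a finite set of levels, so Foster's criterion yields positive recurrence, hence a stationary distribution by Theorem~\ref{th:StatDistr}. If $\theta<0$, the drift is strictly positive and a transience Lyapunov test (equivalently, $G\mathbf{1}<\mathbf{1}$) gives transience. The boundary block $C$ differs from $A_1$, but since recurrence and transience are insensitive to the behaviour on finitely many states, a censoring / finite-perturbation argument shows that $C$ does not affect the classification. The \emph{main obstacle} is the knife-edge case $\theta=0$: here the linear Lyapunov function has zero drift and is inconclusive, so I would pass to a second-order argument, using a quadratic-in-$n$ Lyapunov function to rule out positive recurrence while the stochasticity of the critical $G$ (established through the same Poisson vector $f$ and a continuity argument on the drift) secures recurrence, thereby identifying null recurrence. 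Making the equivalence ``$G$ stochastic $\Leftrightarrow \theta\le 0$'' fully rigorous, and controlling this critical case, is where the real work lies.
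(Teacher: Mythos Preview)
The paper does not prove Theorem~\ref{th:Transient}: it is quoted verbatim from~\cite{LatoucheR87} and used as a black box inside the proof of Theorem~\ref{th:Final}. So there is no ``paper's own proof'' to compare against; your proposal is effectively a sketch of the Latouche--Ramaswami argument itself, and as such it is the right architecture (minimal nonnegative $G$, Foster--Lyapunov drift with a Poisson correction on the phase, second-order analysis at the critical drift). One slip worth flagging: with your sign convention $\theta=\mathbf{1}^T A_0\mu-\mathbf{1}^T A_2\mu$ and $A_0$ moving the level toward the boundary, recurrence (i.e.\ $G$ stochastic) corresponds to $\theta\ge 0$, not $\theta\le 0$ as you wrote in the closing line; the rest of your drift bookkeeping is consistent with the correct direction.
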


By Theorem~\ref{th:StatDistr}, the QBDP does not have an equilibrium
if and only if it has at least one state that is transient or null
recurrent.  Without loss of generality, assume that $n\leq 2H$ (the
case $n > 2H$ can be equivalently derived), which implies
$A\in\real^{H+1\times H+1}$.  Since $A$ is irreducible, one
immediately has that $\mu = \frac{1}{H+1} {\bf 1}$, from which it is
possible to explicitly compute
{\small
\[
\begin{aligned}
{\bf 1}^T A_0 \mu & = \frac{1}{H+1} \sum_{j=0}^{H-1} (H-j) a_{j} \\
{\bf 1}^T A_2 \mu & = \frac{1}{H+1} \sum_{j=H+1}^{n} (j-H) a_{j} .
\end{aligned}
\]}
From Theorem~\ref{th:Transient}, the QBDP does not have an equilibrium
if and only if ${\bf 1}^T A_0 \mu \leq {\bf 1}^T A_2 \mu$ or,
equivalently,
{\small
\[
\sum_{j=0}^{H-1} (H-j) a_{j} \leq \sum_{j=H+1}^{n} (j-H) a_{j} ,
\]}

that, dividing both terms by $a_0$ leads to
{\small
\begin{equation}
\sum_{j=0}^{H-1} \gm{j,1} \leq \sum_{j=H+1}^{n} (j-H) \alpha_{j} .
\label{eq:condition-case-I}
\end{equation}}
This condition is exactly the one that we formulated in the case I of
the Theorem, and has just been shown to be equivalent to the process
being transient on null recurrent.  However, since the QBDP is still
irreducible and aperiodic, a limiting probability exists, which is
given,as in Equation~\eqref{eq:FinalResult1}, by:
{\small
\[
\tilde{\pi}^{(j)} =\lim_{k\rightarrow +\infty} \tilde{\pi}^{(j)}(k) = 0,\,\,\, \forall
j ,
\]}
And this ends the proof of Theorem~\ref{th:Final}.
\begin{remark}
  When condition~\eqref{eq:condition-case-I} strictly applies, the
  numerator of Equation~\eqref{eq:Pi0Unst} is negative.  Since
  Equation~\eqref{eq:FinalResult2} still holds true, the denominator
  of~\eqref{eq:Pi0Unst} will be negative too. It follows that in the
  case of absence of an equilibrium for the QDBP,
  both~\eqref{eq:FinalResult2} and~\eqref{eq:Pi0Unst} return a
  coincident value $\tilde{\pi}^{(0)} > 1$, clearly unfeasible.
\end{remark}

\subsection{Computation of the bound}

As discussed earlier, the steady state probability of meeting the
deadline can be found by computing the first element
$\tilde{\pi}^{(0)}$ of the $\tilde{\pi}$ that solves the equation
$\tilde{\pi} = \tilde{\pi} P$, where $P$ is the infinite transition
matrix in Fig.~\ref{fig:structure} associated with the DTMC
$\mathcal{M}$.  Let us consider a new DTMC whose transition matrix is
given by:
{\small
\begin{equation}
\label{eq:TransMatrix}
P^{'} = \begin{bmatrix} 
b_H & a_{H+1} & a_{H+2} & \dots & a_{n-1} & a_{n} & 0 & \ldots \\ 
b_{H-1} & a_{H} &a_{H+1} &\ldots & a_{n-2} &a_{n-1} & a_n &\ldots \\
0 & a_{H-1}^{'} & a_{H} & \dots & a_{n-3} & a_{n-2} & \ddots & \ldots \\
0 & 0 & a_{H-1}^{'} & a_{H} & \dots & a_{n-3} & \ddots & \ldots \\ 
\vdots & \vdots & \vdots & \vdots &  \ddots & & \ddots & \ddots \\  
\end{bmatrix} ,
\end{equation} }
and $a_{H-1}^{'} = b_{H-1} = a_{H-1} + a_{H-2}+\ldots+a_{0}$. 
\begin{remark}
\label{rem:lump}
The underlying idea is very simple. Consider the DTMC
associated with matrix $P$. The terms on the left of the diagonal are
transition probabilities toward states with a smaller delay than the
current one. By using $P^{'}$ we lump together all these transitions
to the state immediately on the left of the current one. For instance,
if the current state corresponds to $4$ server periods of delay, its
only enabled transition to the left will be to the state associated
with delay $3$.  The effect of deleting the transition toward states
associated with smaller delays is to slow down the convergence toward
small delays, thus decreasing the steady state probability of these
states.
\end{remark}
Let
$\pi$ represent the steady state probability of this system. We can
easily show the following:
\begin{lemma}
  Let $\Gamma$ be a random variable representing the state of the DTMC
  evolving with transition matrix $P$ and $\Gamma^{'}$ be a random
  variable describing the state of the DTMC associated with the
  transition matrix $P^{'}$.  If both DTMC are irreducible and
  aperiodic, then at the steady state $\Gamma^{'}$ has a first order
  stochastic dominance over $\Gamma$: $\Gamma^{'} \succeq \Gamma$,
  according to Definition~\ref{def:Succ}. Therefore, for the first
  element of the steady state probability, we have $\tilde{\pi}^{(0)}
  \geq \pi^{(0)}$.
\label{lem:lemm1}
\end{lemma}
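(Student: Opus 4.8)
The plan is to prove the result by a stochastic comparison (coupling) argument between the two chains, exploiting the fact that $P'$ is obtained from $P$ by relocating transition mass toward larger states, as described informally in Remark~\ref{rem:lump}. First I would make that description precise: in every row $i\ge 2$ the total mass that $P$ assigns to the states strictly to the left of $i$ equals $\sum_{m=0}^{H-1}a_m = b_{H-1} = a'_{H-1}$, and $P'$ places all of this mass on the single state $i-1$ instead of spreading it over $\{i-H,\dots,i-1\}$ (or over $\{0,\dots,i-1\}$ near the boundary), while leaving the self-loop and the rightward transitions untouched; rows $0$ and $1$ coincide exactly. The key algebraic fact underpinning this is the identity $b_{H-i}=\sum_{m=0}^{H-i}a_m$, which follows from $a_0$ being the probability of the smallest admissible computation time, so that the partial sums of the $a_m$ reproduce the values of $F_U$.

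From this description the two hypotheses of the comparison method fall out. The first is row-wise stochastic dominance: since $P'$ only moves the leftward mass rightward onto $i-1$, which is larger than every state in $\{i-H,\dots,i-2\}$, each upper tail $\sum_{k\ge j}P'_{i,k}$ dominates the corresponding tail of $P$; hence $P(i,\cdot)$ is stochastically dominated by $P'(i,\cdot)$ for every $i$, with equality on rows $0$ and $1$. The second is stochastic monotonicity of $P'$: for $i\ge 2$ the increment law of $P'$ is spatially homogeneous on $\{-1,0,1,\dots,n-H\}$, so row $i+1$ is a right shift of row $i$ and monotonicity is immediate; only the boundary comparisons (row $0$ versus $1$, and row $1$ versus $2$) must be checked by hand, each reducing to the nonnegativity of the $a_m$ together with the relation $b_{H-2}+a_{H-1}=b_{H-1}$.

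With these two facts in hand, the plan is to build a monotone coupling $(\Gamma_k,\Gamma'_k)$ started from a common state. Given $\Gamma_k\le\Gamma'_k$, the chain of dominances $P(\Gamma_k,\cdot)\preceq_{\mathrm{st}}P'(\Gamma_k,\cdot)\preceq_{\mathrm{st}}P'(\Gamma'_k,\cdot)$ — the first from the row-wise dominance and the second from monotonicity of $P'$ — gives $P(\Gamma_k,\cdot)\preceq_{\mathrm{st}}P'(\Gamma'_k,\cdot)$, so by Strassen's theorem the transition can be coupled to preserve $\Gamma_{k+1}\le\Gamma'_{k+1}$; equivalently one may simply invoke the classical stochastic comparison theorem for Markov chains. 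Thus $\Gamma_k\le\Gamma'_k$ almost surely for all $k$, which yields $\mathbf{Pr}\{\Gamma_k=0\}\ge\mathbf{Pr}\{\Gamma'_k=0\}$ at every finite $k$. Since both chains are irreducible and aperiodic by hypothesis, the limiting probabilities exist, and letting $k\to\infty$ transfers the inequality to the steady state: $\tilde\pi\preceq_{\mathrm{st}}\pi$, i.e. $\Gamma'\succeq\Gamma$ in the sense of Definition~\ref{def:Succ}. Evaluating the CDF inequality $F_{\Gamma'}(x)\le F_{\Gamma}(x)$ at $x=0$, where $F_{\Gamma'}(0)=\pi^{(0)}$ and $F_{\Gamma}(0)=\tilde\pi^{(0)}$, finally delivers $\tilde\pi^{(0)}\ge\pi^{(0)}$.

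The hard part will not be any single computation but rather keeping the comparison rigorous at the two ends: the spatial homogeneity that makes both dominance and monotonicity transparent in the bulk breaks down in rows $0,1,2$, so these must be treated separately, and one must ensure that the ordering of the finite-horizon distributions still passes to the limit even when a chain fails to be positive recurrent. In that degenerate case the conclusion $\tilde\pi^{(0)}\ge\pi^{(0)}$ is preserved precisely because it is obtained as the limit of the per-step inequalities $\mathbf{Pr}\{\Gamma_k=0\}\ge\mathbf{Pr}\{\Gamma'_k=0\}$ supplied by the coupling, rather than from any assumption of a proper stationary law.
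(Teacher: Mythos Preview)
The paper does not actually present a proof of this lemma: the proof environment contains only ``The proof is omitted for the sake of brevity (see~\cite{TR}),'' deferring to an external technical report. There is therefore no in-paper argument to compare your proposal against.

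That said, your coupling argument is correct and is the standard route to a result of this type. The two structural facts you identify --- row-wise stochastic dominance $P(i,\cdot)\preceq_{\mathrm{st}}P'(i,\cdot)$ (because $P'$ relocates the leftward mass of each row onto the single state $i-1$) and stochastic monotonicity of $P'$ (skip-free-to-the-left homogeneous increments for $i\ge 2$, with the boundary rows checked via $b_{H-1}=b_{H-2}+a_{H-1}$ and $b_H=b_{H-1}+a_H$) --- are exactly the hypotheses of the classical comparison theorem for Markov chains, and they hold as you describe. The identity $b_{H-i}=\sum_{m=0}^{H-i}a_m$ is indeed what ties the $b$-entries of $P$ to the aggregated entry $a'_{H-1}=b_{H-1}$ of $P'$. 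Your treatment of the limit is also appropriate: passing the per-step inequality $\Prob{\Gamma_k=0}\ge\Prob{\Gamma'_k=0}$ to $k\to\infty$ uses only aperiodicity and irreducibility, and survives the null-recurrent/transient case because both sides then tend to~$0$. This line of reasoning matches the informal justification the paper gives in Remark~\ref{rem:lump} and is in the same spirit as the stochastic-dominance machinery the paper invokes from D\'iaz et al.\ in Section~\ref{sec:simplification}.
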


\begin{proof}
  The proof is omitted for the sake of brevity (see~\cite{TR}).
  
\end{proof}

In view of this Lemma, we can concentrate on the system associated to
the transition matrix $P'$.  In such a case, we immediately derive
that the equilibrium condition $\pi = \pi P'$ produces the following
recursion:
{\small
\begin{equation}
\label{eq:RecursivePiClosedForm}
\begin{aligned}
  \pi^{(1)} & = \sum_{j=2}^{n} \alpha_j \pi^{0} , \\
  \pi^{(l)} & = \left( 1+\sum_{j=2}^{n} \alpha_j \right) \pi^{(l-1)} -
  \!\!\!\!\!\! \sum_{j=2}^{\min(n,H+l-1)} \!\!\!\!\!\! \alpha_{j}
  \pi^{(l-j)},
\end{aligned}
\end{equation}}
where the equalities hold for $\forall l > 1$.  This equations, as
well as $P'$, have been respectively derived
from~\eqref{eq:RecursivePi} and $P$ by imposing $H = 1$. In such a
situation, the following theorem holds.

\begin{theorem}
\label{th:FinalClosedForm}
Consider a QBDP described by the transition probability
matrix~\eqref{eq:TransMatrix}, in which both $a_n$ and $a'_{H-1}$
differ from zero. Assume that the matrix $W$ in~\eqref{eq:MatW} has
distinct eigenvalues after imposing $H = 1$. Then, there exists a
limiting probability distribution given by
{\small
\begin{equation}
  \label{eq:FinalResultClosedForm}
  \begin{aligned}
    \pi^{(0)} & = \lim_{k\rightarrow +\infty} \pi^{(0)}(k) = \max\{1 - \sum_{j=2}^{n} (j-1) \alpha_j,\, 0\} =  \\
    & = \max\{1 - \sum_{j=2}^{n} (j-1) \frac{a_j}{a_0},\, 0\} ,
  \end{aligned}
\end{equation}}
while the generic terms $\pi^{(j)}$, with $j>0$, are given
by~\eqref{eq:RecursivePiClosedForm}.
\end{theorem}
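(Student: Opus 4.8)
The plan is to obtain Theorem~\ref{th:FinalClosedForm} as the specialisation of Theorem~\ref{th:Final} to the case $H=1$. As noted just before the statement, the matrix $P'$ in~\eqref{eq:TransMatrix} and the recursion~\eqref{eq:RecursivePiClosedForm} are exactly what the structure of Fig.~\ref{fig:structure} and the recursion~\eqref{eq:RecursivePi} become once $H=1$: there is a single sub--diagonal entry $a'_{H-1}$, which plays the role of the quantity $a_0$ of Theorem~\ref{th:Final} and is nonzero by hypothesis, while $a_n\neq 0$ is assumed as well. First I would check that the standing hypotheses of Theorem~\ref{th:Final} transfer to $P'$: having $a'_{H-1}$ and $a_n$ both different from zero makes the chain irreducible and aperiodic, and the distinctness of the eigenvalues of $W$ (with $H=1$ substituted) is assumed directly. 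Hence Theorem~\ref{th:Final} applies and it only remains to evaluate its two branches at $H=1$.

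The decisive simplification is that for $H=1$ the outer sums collapse. Since $\alpha_0=a_0/a_0=1$, the left--hand side of the dichotomy becomes $\sum_{j=0}^{H-1}\gm{j,1}=\gm{0,1}=\alpha_0=1$, while the right--hand side is $\sum_{j=H+1}^{n}(j-H)\alpha_j=\sum_{j=2}^{n}(j-1)\alpha_j$. The two cases therefore read: if $1\le\sum_{j=2}^{n}(j-1)\alpha_j$, then $\pi^{(0)}=0$ by~\eqref{eq:FinalResult1}; and if $1>\sum_{j=2}^{n}(j-1)\alpha_j$, then $\pi^{(0)}=\prod_{\beta\in\mathcal{B}_s}(1-\beta)$ by~\eqref{eq:FinalResult2}.

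To turn the product in the second case into the claimed closed form, I would exploit that $H=1$ leaves no non--unit unstable eigenvalues. By the Rouch\'e argument used in the proof of Theorem~\ref{th:Final}, exactly $H=1$ eigenvalue of $W$ has modulus $\ge 1$ and $H-1=0$ of them have modulus $>1$; that single eigenvalue is $\beta_1=1$, so $\beta_2,\dots,\beta_n$ are all stable, $\mathcal{B}_s=\{\beta_2,\dots,\beta_n\}$ and $\mathcal{B}_s^\star=\emptyset$. Consequently the $H-1=0$ constraints~\eqref{eq:RecEq2} are vacuous, the normalisation equation~\eqref{eq:RecEq1} alone fixes the unknown, and~\eqref{eq:Pi0Unst} has an empty denominator product, giving $\tilde{\pi}^{(0)}=D_1$. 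Specialising the identity~\eqref{eq:EigProdSimply} to $H=1$ then yields $\prod_{\beta\in\mathcal{B}_s}(1-\beta)=\sum_{j=0}^{H-1}\gm{j,1}-\sum_{j=H+1}^{n}(j-H)\alpha_j=1-\sum_{j=2}^{n}(j-1)\alpha_j$. As an independent check one may factor the characteristic polynomial~\eqref{eq:PolyAlpha} as $(\lambda-1)\prod_{i=2}^{n}(\lambda-\beta_i)$ and observe that $\prod_{i=2}^{n}(1-\beta_i)$ equals the derivative of~\eqref{eq:PolyAlpha} evaluated at $\lambda=1$, which is again $1-\sum_{j=2}^{n}(j-1)\alpha_j$.

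It then remains to merge the two branches. In the first case the value is $0$ and $1-\sum_{j=2}^{n}(j-1)\alpha_j\le 0$, whereas in the second the value is the strictly positive quantity $1-\sum_{j=2}^{n}(j-1)\alpha_j$; both are captured by $\pi^{(0)}=\max\{1-\sum_{j=2}^{n}(j-1)\alpha_j,\,0\}$, which is~\eqref{eq:FinalResultClosedForm} after substituting $\alpha_j=a_j/a_0$. The terms $\pi^{(j)}$ with $j>0$ follow from~\eqref{eq:RecursivePiClosedForm}, itself~\eqref{eq:RecursivePi} at $H=1$. I expect the only delicate point to be the eigenvalue bookkeeping: one must be certain that for $H=1$ the stable set is precisely the non--unit spectrum, so that~\eqref{eq:EigProdSimply} can legitimately be invoked; everything else is a routine substitution of $H=1$ into identities already established in the proof of Theorem~\ref{th:Final}.
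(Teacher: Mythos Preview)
Your proposal is correct and follows essentially the same route as the paper: specialise Theorem~\ref{th:Final} to $H=1$, use the Rouch\'e count so that $\mathcal{B}_s=\{\beta_2,\dots,\beta_n\}$, and then invoke~\eqref{eq:EigProdSimply} at $H=1$ to turn $\prod_{\beta\in\mathcal{B}_s}(1-\beta)$ into $1-\sum_{j=2}^{n}(j-1)\alpha_j$. The paper's own proof is just these two sentences; your write--up merely makes explicit the merging of the two branches into the $\max$ and adds the derivative check $P'(1)=\prod_{i=2}^n(1-\beta_i)$, which is a nice sanity check but not a different argument.
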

\begin{proof}
  The proof follows immediately from the fact that $H = 1$ implies
  that $\beta_1 = 1$ is the only unstable eigenvalue if the QBDP has
  an equilibrium, i.e., $\mathcal{B}_s$ of Theorem~\ref{th:Final}
  comprises all the eigenvalues except $\beta_1 = 1$.  Hence, by
  considering~\eqref{eq:EigProdSimply} for $H = 1$, the proof follows
  immediately.
\end{proof}

We complete the section with a remark.  The first one is on the
intuitive meaning of the result just proposed.  Consider a DTMC with
transition matrix as in Fig.~\ref{fig:structure} and assume for
simplicity $n=4$ and $H=1$. The analytical bound in
Theorem~\ref{th:FinalClosedForm} is given by:
{\small
\[
\begin{array}{ll}
\pi^{(0)}&= 1 - 3 \alpha_4 - 2 \alpha_3 - \alpha_2  = 1 - 3 \frac{a_4}{a_0} - 2 \frac{a_3}{a_0} - \frac{a_2}{a_0}
\end{array}
\]}
In the computation of the steady state probability $\pi^{(0)}$ we have
to consider every possible transition to the right (i.e., increasing
the delay) that the system can make. For each of them, we compute the
ratio between the probability of taking the transition and the
aggregate probability of moving to the left (decreasing the delay).
In the final computation each of this ratio has a state proportional
to the delay introduced. In our example, $a_4$ corresponds to three
steps to the right and is weighted by the factor $3$.

The application of this result to our context can be formalised in the
following:
\begin{corollary}
  \label{cor:FinalClosedForm}

  Consider a resource reservation used to schedule a periodic task and
  suppose that the QBDP produced respects the assumption in
  Theorem~\ref{th:Final}.  Then the probability of respecting the
  deadline is greater than or equal to:
{\small
\begin{equation}
\pi^{(0)}= 1 - \sum_{j=2}^{n}(j-1) \frac{U'_\Delta(N+j-1)Q^s)}{ \sum_{h=0}^{N-1}U'_\Delta(hQ^s)}
\end{equation}}

\end{corollary}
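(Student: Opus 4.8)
The plan is to obtain the corollary as a specialisation of Theorem~\ref{th:FinalClosedForm}, read through the lens of the backlog recursion~\eqref{eq:vprob} and turned into a genuine \emph{lower} bound by Lemma~\ref{lem:lemm1}. The quantity to be bounded from below is the true steady--state probability of meeting the deadline, i.e.\ the first component $\tilde{\pi}^{(0)}$ of the stationary vector of the full matrix $P$ of Fig.~\ref{fig:structure}. Lemma~\ref{lem:lemm1} already guarantees $\tilde{\pi}^{(0)} \geq \pi^{(0)}$, where $\pi^{(0)}$ is the first stationary component of the lumped chain $P'$; since $P'$ collapses every left transition onto the single sub--diagonal entry $a'_{H-1}=b_{H-1}$ (Remark~\ref{rem:lump}), it has precisely the $H=1$ structure required by Theorem~\ref{th:FinalClosedForm}. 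Applying that theorem yields $\pi^{(0)}=\max\{1-\sum_{j=2}^{n}(j-1)\alpha_j,\,0\}$ with $\alpha_j=a_j/a_0$, so the entire task reduces to rewriting $a_j$ and $a_0$ in terms of the reservation parameters.

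First I would fix the state coarsening implicit in the statement: one unit of state is one server period, i.e.\ a backlog increment of $Q^s$. Reading~\eqref{eq:vprob} in a high--delay state, $v_{k+1}=v_k-NQ^s+c_{k+1}$, so the state index changes by $c_{k+1}/Q^s-N$. A jump that raises the delay by $j-1$ server periods therefore corresponds exactly to $c'_{k+1}=(N+j-1)Q^s$, giving the right--transition coefficients $a_j=U'_\Delta\bigl((N+j-1)Q^s\bigr)$ for $j\geq 1$ (with $a_1$ the self--loop at $c'=NQ^s$). Conversely, the delay fails to increase precisely when $c'_{k+1}<NQ^s$; after the $P'$ lumping all such transitions are merged into the single left coefficient, so the normaliser becomes the aggregate left mass $a_0=\sum_{h=0}^{N-1}U'_\Delta(hQ^s)=\Prob{c'<NQ^s}$. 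Substituting $\alpha_j=U'_\Delta\bigl((N+j-1)Q^s\bigr)/\sum_{h=0}^{N-1}U'_\Delta(hQ^s)$ into the closed form recovers the displayed expression. The outer $\max\{\cdot,0\}$ may be dropped because, specialised to $H=1$, case~II of Theorem~\ref{th:Final} is exactly the statement that the bracketed quantity $1-\sum_{j=2}^{n}(j-1)\alpha_j$ is positive, while in the complementary regime the bound is vacuously valid as a lower bound on a probability.

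I expect the main obstacle to be bookkeeping rather than analysis. One must check that the coarsening to budget--sized cells $Q^s$ is consistent with the raw definitions $a_{H+h}=U(h+NQ^s)$ and $b_{H-i}=F_U(NQ^s-i)$ of Fig.~\ref{fig:structure}, whose native index advances by one per unit of backlog rather than per $Q^s$. The crucial identity to verify is that the single left entry produced by the lumping, $a_0=a'_{H-1}=b_{H-1}=F_{U'_\Delta}(NQ^s-1)$, coincides with $\sum_{i=0}^{H-1}a_i=\sum_{h=0}^{N-1}U'_\Delta(hQ^s)$, i.e.\ that the aggregated left mass equals $\Prob{c'<NQ^s}$; this is precisely the defining relation $a'_{H-1}=b_{H-1}$ of the chain $P'$. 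Confirming this equivalence, together with checking that the granularity $\Delta$ (a sub--multiple of $Q^s$) places each right--transition probability exactly at the multiple $(N+j-1)Q^s$ so that no mass is lost in the re--indexing, is the only delicate step; everything else is immediate from Theorem~\ref{th:FinalClosedForm} and Lemma~\ref{lem:lemm1}.
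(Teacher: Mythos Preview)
Your approach is essentially the same as the paper's. The paper's own justification is a terse three--item list: (1) the $\Delta$--scaled DTMC with matrix $P$ is a conservative approximation of the original system, (2) Lemma~\ref{lem:lemm1} gives the further conservative passage from $P$ to $P'$, and (3) Theorems~\ref{th:Final} and~\ref{th:FinalClosedForm} supply the closed form for $\pi^{(0)}$ of $P'$. You reproduce exactly this chain, and in fact go further by spelling out the coefficient identification $a_j\leftrightarrow U'_\Delta\bigl((N+j-1)Q^s\bigr)$ and $a_0\leftrightarrow\sum_{h=0}^{N-1}U'_\Delta(hQ^s)$, which the paper does not make explicit at all. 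Your observation that the displayed formula tacitly reads the state index in units of $Q^s$ (so that the lumped left mass $a'_{H-1}=b_{H-1}$ coincides with $\sum_{h<N}U'_\Delta(hQ^s)$) is precisely the bookkeeping the paper skips; it is the right thing to check, and nothing in your plan is missing or incorrect.
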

This corollary descends from the following facts: 1) the DTMC
described by the matrix $P$ in Fig.~\ref{fig:structure} is a
conservative approximation of the system, 2) Lemma~\ref{lem:lemm1}
provides an analytically tractable approximation of the DTMC with
transition matrix $P'$, 3) Theorem~\ref{th:Final} and
Theorem~\ref{th:FinalClosedForm} contain the analytical bounds.

\section{Experimental validation}
\label{sec:experiments}

We have validated the presented approach in two different ways.
First, we have computed the probabilistic deadline using synthetic
distributions, to compare accuracy and efficiency of the analytic
bound against several other methods and to assess the impact of the
scaling factor $\Delta$ (Eq.~\eqref{eq:uh}) and of the bandwidth.
This set of experiment reveals a very good performance of the bound
for appropriate choices of the scaling factor $\Delta$. Its very low
computation time allows one to select the best choice of $\Delta$ by
testing a number of alternative choices.
The tightness of the bound improves when the bandwidth is
sufficient to achieve an acceptable real--time behaviour for the
application.

In a second set of experiments, we have evaluated the method on a real
robotic application, for which the mathematical assumptions underlying
the model do not apply strictly.  The results produced are obviously
approximate. Still, the good quality of the approximation makes an
interesting case for the practical applicability of the methodology.

\subsection{Synthetic Distributions} 
\label{sec:experiments-synthetic}
We report the results of the comparison between the numeric solution
resulting from Theorem~\ref{th:Final} and discussed in
Remark~\ref{rem:companion} ({\tt companion}), the analytic
approximated bound in Corollary~\ref{cor:FinalClosedForm} ({\tt
  analytic}) the Cyclic Reduction algorithm~\cite{bini2005numerical}
({\tt CR}) and the bound developed by Abeni et al.~\cite{Abe12} ({\tt
  gamma}).  We have chosen {\tt CR} after a selection process in which
several algorithms for the solution of general QBDP problems and
implemented in the SMCSolver tool--suite~\cite{bini2012smcsolver} were
tested on a set of example QBDPs derived from our application. The
{\tt gamma} algorithm is an approximate bound specifically tailored to
the analysis of probabilistic guarantees for resource reservations, so
it was considered as as a perfect match for our {\tt analytic} bound.
The different algorithms have been implemented in C++ in the
PROSIT~\cite{WATERS2014} tool. PROSIT can be used for analysis and for
synthesis purposes (as shown in Section~\ref{sec:example}).  When the
tool is used for analysis, the user specificies activation period and
deadline, parameters of the RR ($Q^s$ and $T^s$), distribution of computation and
inter--arrival times and solution algorithm.  When the tool is queried
in this way, it computes the distribution of the task
response times and hence the probability of meeting the deadline.

As a representative sample of our findings, we report below the
results obtained for a periodic task with period $T=100ms$ and random
execution time.  The computation time was distributed according to a
beta distribution: $P\left\{C=c\right\}=f_U(c) = J(\alpha,\beta)
c^{\alpha-1} \left(1-c\right)^{\beta-1}$, with support (i.e., the
validity range for the random variable) $c \in \left[0,
  99500\right]$~$\mu$s, with $\alpha = 2$ and $\beta=7$
($J(\alpha,\beta)$ is a normalisation constant).  The beta
distribution is interesting because it is unimodal and has a finite
support, which make it a good fit to approximate the behaviour of a
large number of real--time applications.

\noindent
{\bf Effect of $\Delta$.}
A first set of experiments was to evaluate the impact of the $\Delta$
scaling factor. We considered two possible values for the
reservation period: $T^s=\frac{1}{4}P=25ms$ and
$T^s=\frac{1}{2}P=50ms$.  The budget was chosen equal to $Q^s =
0.45 T^s$ with a bandwidth $B=45\%$. Figure~\ref{fig:var-delta}
shows the results for the probability $\pi^{(0)}$ of respecting the
deadline achieved for different values of $\Delta$ (chosen as a
sub--multiple of $Q^s$).
\begin{figure}[t]
  {\includegraphics[height=7.5cm,width=\columnwidth]{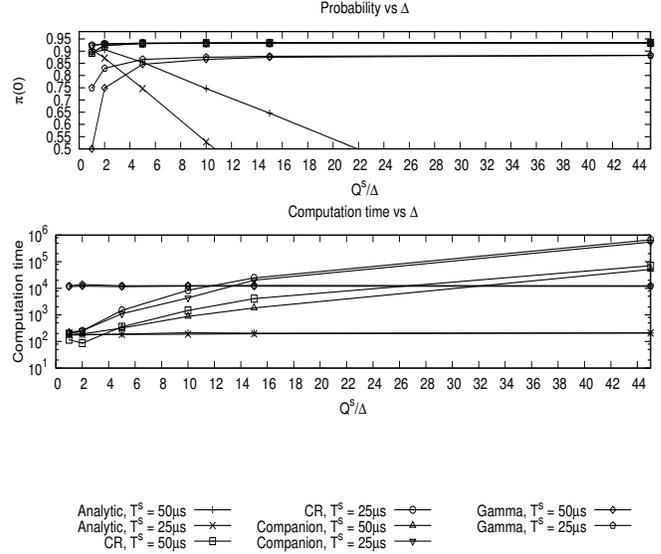}}
  \caption{Impact of the scaling factor $\Delta$ on the accuracy of the computed probability and on
    the computation time}
  \label{fig:var-delta}
\end{figure}
In accordance with our expectations, {\tt CR} and {\tt companion}
produce almost the same result in term of probability (differences are
from the $6^{th}$ digit) and the probability changes monotonically
with $\Delta$. For example, for $T^s=50ms$ the value of the
probability is $0.89$ for $\Delta = Q^s$ (the coarsest possible
granularity), while it is $0.93$ for $\Delta = Q^s/45$.  The reason
for this decrease is obvious since re--sampling introduces a
conservative approximation and the error is larger for increasing
granularity.  For both {\tt CR} and {\tt companion}, the computation
time changes with $\Delta$ in a substantial way. For example, for {\tt
  CR} and for $T^s=50ms$, it is $182ms$ at $\Delta=Q^s$ and
$56.179ms$ at $\Delta = Q^s/45$.  In this run of experiments, the
computation time of the {\tt companion} algorithm is slightly smaller
than the one reported using {\tt CR}, but the results are too close to
claim a clear dominance.

For the {\tt analytic} bound the computed probability is not always
monotonic with $\Delta$.  In our example, for $T^s=50ms$ the
probability grows moving from $0.892$ at $Q^s$ to $0.906$ at $Q^s/2$,
and then decreases, finally becoming $0.012$ at $Q^s/45$.  Sharper
changes can be observed for other distributions.  The reason is that
in the {\tt analytic} bound we have two distinct effects (which play
in opposite directions). On the one hand, if we reduce $Q^s$ we have
the same conservative approximation effect as for {\tt CR} or for any
other numeric method.  On the other, as explained in
Remark~\ref{rem:lump}, lumping together all backward transitions
reduce the recovery of the error when the computation demand is
smaller than the allocated bandwidth. In this example, the first
effect determines the growth of the probability when going from
$\Delta=Q^s$ to $\Delta=Q^s/2$; the second effect determines the
decrease of the probability form $Q^s/2$ onward. The probability
computed by {\tt analytic} is very close to the one of the numeric
algorithm it derives from ({\tt companion}) for $\Delta = Q^s/2$,
while the computation time is several orders of magnitude below.  In
our experience with different distributions (both synthetic and
experimental) the choice of $\Delta=Q^s/2$ has consistently produced
an acceptable performance.  The {\tt gamma} bound shows an
intermediate performance between numeric methods and the analytic
bound both for the accuracy and for the computation time.

\noindent {\bf Behaviour with changing bandwidth.} In order to compare
the accuracy of the {\tt analytic} method against the numeric solutions
({\tt CR}) for different bandwidths, we considered a task with the
activation and scheduling parameters as in the experiments reported
above. The budget $Q^s$ was changed so that the resulting bandwidth
ranged in $\left[35\%,\,60\%\right]$.  The granularity $\Delta$ was
fixed for {\tt CR} to a small value ($50\mu$s) to achieve a good
approximation and to $\Delta=Q^s/2$ for the {\tt analytic} solution.

\begin{table}
\caption{Probability for different bandwidth and $\Delta=50us$}
\label{tab:optimisation1}
{\small
\centerline{\begin{tabular}{cccccc}
\hline
Bandwith & 35\% & 40\%  & 45\% & 50\% & 60\%\\
\hline
\hline
Analytic Bound &0.602 & 0.809 & 0.906 &0.956  & 0.991\\
\hline
Cyclic Reduction & 0.773 & 0.878 & 0.929 & 0.965 & 0.992 \\
\hline
\end{tabular}}}
\end{table}

The results reported in Table~\ref{tab:optimisation1} show an
important gap between {\tt analytic} and {\tt CR} for small values of
the bandwidth. The gap is significantly reduced for bandwidth greater
than $45\%/50\%$. Smaller values of the bandwidth produce a
probability level below $0.8$, which is not acceptable for most
real--time applications.  The reason for the improvement of the analytic bound
when the bandwidth increases is probably due to the fact that the
system recovers more easily from large delays and this alleviates the
impact of the conservative simplifications that underlie the analytic
model.

\subsection{Real application} 
As a real test case, we have considered
a robotic vision programme that identifies the boundaries of the lane
and estimates the position of a mobile robot a using a web--cam mounted on the chassis
of the robot~\cite{DBLP:journals/tim/FontanelliMRP14}. The
computation was carried out using a Beagle Board
(\url{www.beagleboard.org}) running Ubuntu. The version of the Kernel
used (3.16) supports RR scheduling (under the name
of SCHED\_DEADLINE policy) alongside the standard POSIX real--time
fixed priority policies (SCHED\_FIFO and SCHED\_RR).

The robot executed $30$ different paths across an area delimited by a
black line. For each run, we have captured a video stream containing
the line.  The data sets roughly consisted of $2500$ frames each and
were later used for multiple off--line execution of the vision
algorithm.  A first group of ten executions for each data set was with
the algorithm executed in a task running alone and scheduled with the
the maximum real--time priority (99 for SCHED\_FIFO).  This allowed us
to collect statistics of the computation time associated with the data
set.  In a second group of executions, we have replicated a real--life
condition. The vision algorithm was in this case executed in a
periodic task processing a frame every $T=40ms$.  The task was
scheduled using SCHED\_DEADLINE, with
server period $T^s = 20ms$ and with different choices of the bandwidth
in the range $[35\%, 60\%]$.  For each data set and for each choice of
the bandwidth, we repeated ten executions recording the probability of
deadline miss.
The probability averaged through the $10$ execution was compared with
the one that found using the PROSIT tool, executed with different
solution methods and with the distribution estimated from the data set
as input. In Figure~\ref{fig:3000_best}, we report the CDF
distributions of the difference between the two probabilities for
three representative choices of the bandwidth.  The symbol
$\Delta_{{\tt Analytic}}$ denotes the difference obtained using the
analytic method (with different choices of the scaling factor
$\Delta$), while $\Delta_{{\tt CR}}$ denotes the difference obtained
using the cyclic reduction QBDP solver, with $\Delta$ set to $50\mu
s$.  The three levels of bandwidth shown in the three sub--plots
produced different probability of meeting the deadline. For bandwidth
equal to $40\%$, this probability ranged in $[75\%, 97\%]$. The range
was $[90.5\%, 99\%]$ for bandwidth equal to $50$\% and it was
$[95.2\%, 100\%]$ for bandwidth equal to $60$\%.

As we observe in the plot, the numeric algorithm (CR) produces an
error between $-3$\% and $1$\% for all the three values of the
bandwidth.  For the analytic bound, in this specific case, the most
convenient choice was to set the scaling factor $\Delta$ to $Q^s$ (in
other cases we found a better performance for smaller values).  The
bound is evidently less accurate, but: 1. it remains below $5\%$ at
least $85\%$ of the times even in the most challenging scenario (small
bandwidth), 2. is reduced to below $2\%$ for higher values of the
bandwidth.

\begin{figure}[t]
\centering
\begin{tabular}{c}
\includegraphics[height=7.5cm,width=0.97\columnwidth]{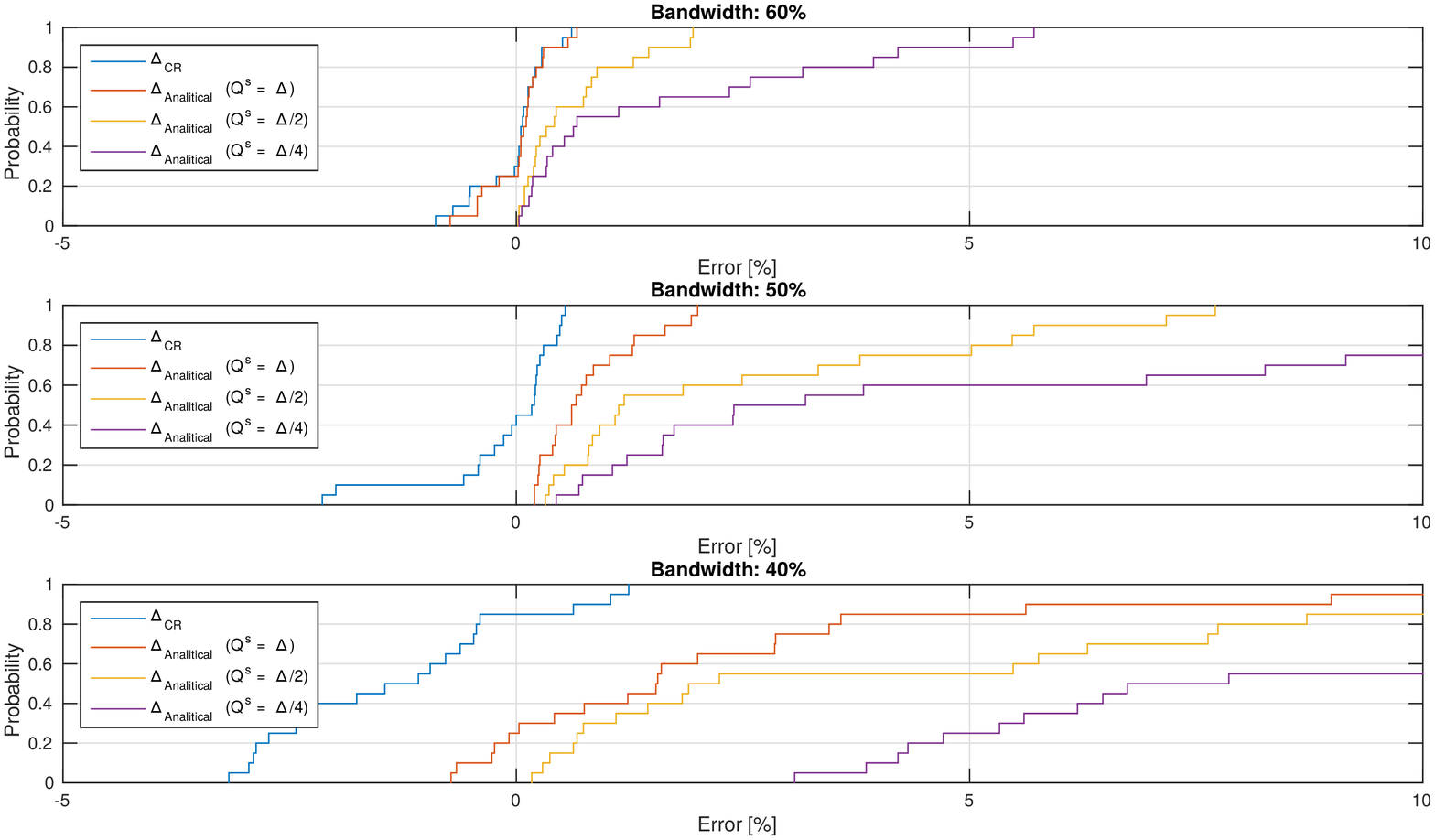} \\
\end{tabular}
\caption{Distribution of the difference between the experimental probability and
the one found with PROSIT tool.}
\label{fig:3000_best}
\end{figure}

We observe that the vision algorithm iteratively builds upon
previous results to produce the estimate. This introduces a strong
correlation structure in the process that disrupts the assumptions
required for an exact application of the method. In addition, the
execution on a ``real'' operating system comes along with an
inevitable amount of un--modelled overhead.  Still, the level of
approximation that we have reported could be acceptable in most cases.
Similar software applications (video--encoding and decoding) were
analysed in a previous work~\cite{ecrts2012} with similar conclusions.
Clearly, we are not claiming any generality for this fact. We are
aware that for other applications dropping the time dependency and the
correlation structure of the computation time process could produce
very large errors in the estimation of the probability.  As reported
in the related work, this is a very active research area that is
likely to attract the attention of different researchers in the
forthcoming years.

\subsection{Discussion}
\label{sec:discussion}

In our first conference paper~\cite{etfa2012}, we derived a model for
the evolution of a RR scheduled real--time task. The model was shown
to be a QBDP and was solved using the
simple numeric algorithm proposed by Latouche and
Ramaswami~\cite{LatoucheR87}. An important limitation of the model was
its pessimism due to the fact that it neglected the budget shared
between adjacent jobs. For instance, in the example in
Figure~\ref{fig:example}, the model would ignore the budget used by
the second job in the fourth reservation period.  In a later
work~\cite{ecrts2012}, the same model was instantiated to the sub--case
of periodic tasks, it was further simplified in a conservative
direction and then used for the computation of an analytic bound.

In the present paper, we start from the more accurate model introduced
by Abeni and Buttazzo back in 1998~\cite{Abe98-3}, and we instantiate
it to the case of periodic tasks (Section~\ref{sec:full_model}).  We
introduce the scaling factor $\Delta$
(Section~\ref{sec:simplification}) obtaining, once again, a QBDP.
When the model is used for numeric computations, the $\Delta$
parameter allows us to decide the degree of
pessimism introduced in the analysis. If we set $\Delta=1$, we obtain
a close approximation of the actual behaviour of the task. If we set
$\Delta = Q^s$, we recover the conservative model used in our previous
work~\cite{etfa2012}.  As shown in Figure~\ref{fig:var-delta},
very different trade--offs between 
computation time and accuracy of the probability result from different
choices of $\Delta$.

The key contribution of this paper is found by applying the same type
of analytic reasoning as in~\cite{ecrts2012}, but with a few
substantial differences in the final result.  Indeed,
Theorem~\ref{th:Final} contains an exact formula for the computation
of the steady state probability of meeting the deadline, which is used
as a basis for a novel numeric algorithm with competitive performance
with respect to the state of the art.  On the contrary, the key result
of~\cite{ecrts2012} is an analytic bound which can sometimes be very
conservative.  The same bound is rediscovered in this paper
specialising Theorem~\ref{th:Final} to a conservative approximation of
the model (see Theorem~\ref{th:FinalClosedForm}).  Once again, we can
take advantage of the configuration options offered by $\Delta$ to
refine the precision of the result. As shown in Figure~\ref{fig:3000_best},
the choice $\Delta = Q^s$ (which applies the model proposed
in~\cite{ecrts2012}) is not guaranteed to be the best one in all cases.
Therefore, the generalisation shown in this paper is relevant both
from the theoretical and from the practical point of view.

\section{Probabilistic Quality Optimisation}
\label{sec:example}
In order to show a practical application of our approach, we have
considered a situation where a single computing board (e.g., a video
server, or a set--top box) is used to process (in real--time) multiple videos at the
same time.
This example is based on two different videos (encoded with a bit--rate of $600$Kb/s): the first one,
``BridgeClose'', displays a bridge with occasional people coming through
(so, it is characterised by a single, almost static scene with slow movements)
and comes from a
public data base (\url{http://trace.eas.asu.edu/yuv/index.html}); the second
video (``ufo''), instead, is a movie trailer
(freely available at \url{http://www.theufo.net} - trailer 1) characterised
by frequent scene changes and rapid movements.

One of the best known ways to evaluate the quality of a video is the
Peak Signal to Noise Ratio (PSNR), which is computed comparing
pairwise the frames of the original raw video and of the one obtained
after encoding and decoding it~\cite{Klaue03evalvid,psnr-tools}.  This
metric can be evaluated considering a video player implemented as a
periodic real--time task.  If a job misses its deadline, the video
frame is not played back but it is decoded (to allow the incremental
decoding of the frames that follow).  In this case, the behaviour of
most players is to fill--in the ``hole'' by simply repeating the last
decoded frame.  This is perceived by the user as a reduction in
quality, which is well reflected in a degradation of the PSNR.
\begin{figure}[t]
\centering{
\begin{tabular}{c}
\includegraphics[width=0.9\columnwidth,height=4cm]{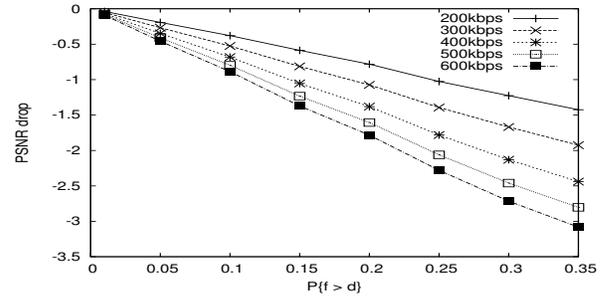}
\end{tabular}}
  \caption{PSNR degradation  as a function of the deadline miss probability for
``BridgeClose'' video.}
  \label{fig:psnr2}
\end{figure}
This is visible in Fig.~\ref{fig:psnr2}, where we show the quality as
a function of the probability of deadline miss for the first video. 
This plot has been created using the PSNR--TOOL
software~\cite{psnr-tools}.

The PSNR was interpolated by a line with slope $8.9$ for
``BridgeClose'' and $42.051$ for ``ufo''. This difference is explained
by the different nature of the movies (static the former, and dynamic the latter).
Both movies
were decoded using a player executed by a periodic task and scheduled
by the SCHED\_DEADLINE policy.
The distributions of the execution times were recorded on a notebook
powered by an Intel Atom Processor, and the resulting CDFs are shown in
Fig.~\ref{fig:cdc-streams}.
\begin{figure}
\centering{\includegraphics[width=0.9\columnwidth, height=4cm]{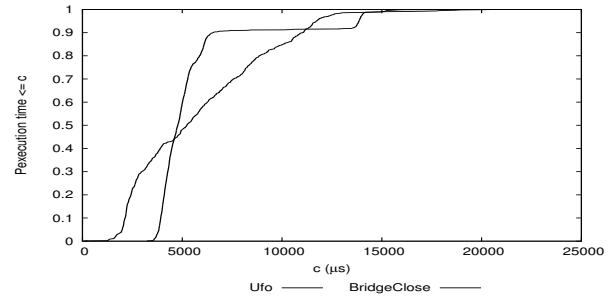}}
\caption{Cumulative Distribution Functions for the execution of the decode for the two streams.}
\label{fig:cdc-streams}
\end{figure}
\begin{table}
\caption{Results of Probabilistic Optimisation}
\label{tab:optimisation}
{\small
\centerline{\begin{tabular}{|c|c|c|c|c|}
\multicolumn{5}{c}{Cyclic Reduction -- Computation time:753801758$\mu$s}\\
\hline
Task & Opt. Budget  & Estim. Prob. & Exact Prob. & Quality\\
\hline
BridgeClose & 3000us & 0.7427 &0.743592.  & 39.65\\
Ufo &6449us & 0.9995 & 0.9995 & 41.58 \\
\hline
\multicolumn{5}{c}{Analytic Bound  -- Computation time:114524$\mu$s}\\
\hline
Task & Opt. Budget  & Estim. Prob. & Exact Prob. & Quality\\
\hline
BridgeClose &3462us & 0.7392 & 0.8292 & 40.50\\
Ufo & 3997us & 0.8732 & 0.9138 & 37.98 \\
\hline
\end{tabular}}}
\end{table}

The problem considered here was to find an optimal allocation of
bandwidth between the different tasks.  To this end, we have used
the synthesis abilities of PROSIT. When PROSIT is used for synthesis,
the user specifies for each task: 1) activation period and deadline,
2) reservation period, 3) distribution of the computation time 4) solution
algorithm for the probabilistic guarantees,
5) quality as a function of the probability of
meeting the deadline and 6) constraints on the minimal value of the
quality.  The quality of the
different tasks can be combined into global quality metrics.  In this
particular example, we have used the infinity norm metric: assuming
$f_i$ as the quality of the $i^{th}$ task, the cost function to
maximise over the budget $Q^s_1$ and $Q^s_2$ is $\max_i \min f_i$.
For each candidate choice of $Q^s_i$ the tool evaluates the steady
state probability using different solvers for probabilistic
guarantees.  The optimal solution is found by a bisection algorithm,
which uses repeated calls to the algorithm for the computation of the
probability.
As a solver for the probability computation
we have implemented {\tt analytic} (with $\Delta=Q^s/2$) and {\tt CR}
(with $\Delta=50$~$\mu$s).

Choosing $30$~ms for the activation period (corresponding to
$33$~fps), setting the server period to $10$~ms, and restricting the
total bandwidth available to $95\%$ (to leave some room for other
applications), the tool produces the results in
Table~\ref{tab:optimisation}.  We identified empirically the minimum
acceptable PSNR as $39$ for ``Ufo'' and $31$ for ``BridgeClose''.
These values were codified as constraints in the optimisation problem.
In both cases, the algorithm identified a sub--optimal solution,
because the probability evaluated by the solvers is only a lower
bound.  We re--evaluated the exact probability for each of the
sub--optimal assignment of budgets using the {\tt CR} solver
with $\Delta=1$ (which produces the exact computation of the
probability, within the limits of numeric errors). This allowed us to
compare the actual quality attained by the optimisation algorithm in
the two different configurations.
Because the optimiser maximises the worst performance of
the two tasks, the algorithm tends to equalise the QoS achieved by the
tasks for the optimal budget. For both solvers, the optimal solution
assigns a larger bandwidth (almost $64\%$ for the {\tt CR} and almost
$40\%$ for the {\tt analytic}) to the ``Ufo'' stream; this is because
its quality degrades more quickly with the probability of meeting the
deadline for ``Ufo'' than for ``BridgeClose''. In this example, the
use of the analytic bound produces an optimal value $37.98$ which is
only $4\%$ away from the value obtained with cyclic reduction, but the
computation time (evaluated on an Intel Core i7 with $16GB$ of RAM) is
four orders of magnitude below.

\section{Conclusions and Future Work}
\label{sec:conclusions}

In this paper, we have considered the problem of probabilistic
guarantees for RR scheduled soft real--time periodic tasks.  We have
shown that the evolution of the system can be modelled as a QBDP. The
probability of meeting the deadline amounts to the computation of the
steady state probability of this process.  We have shown how this is
possible by numeric means with different performance/accuracy
tradeoffs. We have also shown an analytical bound and offered a
comprehensive validation of these results by experiments and
simulations.

The gap between the analytic bound and precise numeric solution
narrows down when the task is required to meet the deadline with a
high probability (e.g., more than $80\%$).  For this reason, the
analytic bound appears as a very promising option to solve QoS
optimisation problems involving multiple tasks, when the QoS is a
function of the probability for the task to meet its deadline and an
acceptable level of performance is required to all tasks.  In these
cases, the frequent calls to the solver to identify the optimal
allocation of resources, such as are required by branch and bound or
dichotomic search optimisation, can lead to substantial reduction of
the computation time when the analytic bound is used in the face of an
acceptable distance from the optimal solution.

\noindent
{\bf Future work} In our future work, we will investigate further on
the connection between QoS and probabilistic deadlines in several
application domains, we will extend our analysis and the application
of our methods to the case of applications based on multiple tasks and
to the case of computation time that is not i.i.d.

\bibliographystyle{IEEEtran}
\bibliography{retis}

\begin{thebibliography}{10}
\providecommand{\url}[1]{#1}
\csname url@samestyle\endcsname
\providecommand{\newblock}{\relax}
\providecommand{\bibinfo}[2]{#2}
\providecommand{\BIBentrySTDinterwordspacing}{\spaceskip=0pt\relax}
\providecommand{\BIBentryALTinterwordstretchfactor}{4}
\providecommand{\BIBentryALTinterwordspacing}{\spaceskip=\fontdimen2\font plus
\BIBentryALTinterwordstretchfactor\fontdimen3\font minus
  \fontdimen4\font\relax}
\providecommand{\BIBforeignlanguage}[2]{{%
\expandafter\ifx\csname l@#1\endcsname\relax
\typeout{** WARNING: IEEEtran.bst: No hyphenation pattern has been}%
\typeout{** loaded for the language `#1'. Using the pattern for}%
\typeout{** the default language instead.}%
\else
\language=\csname l@#1\endcsname
\fi
#2}}
\providecommand{\BIBdecl}{\relax}
\BIBdecl

\bibitem{Abe98-2}
L.~Abeni and G.~Buttazzo, ``Integrating multimedia applications in hard
  real-time systems,'' in \emph{Proceedings of the IEEE Real-Time Systems
  Symposium}, Madrid, Spain, December 1998.

\bibitem{FontanelliGP13}
D.~Fontanelli, L.~Greco, and L.~Palopoli, ``{Soft Real–Time Scheduling for
  Embedded Control Systems},'' \emph{Automatica}, vol.~49, pp. 2330--2338, July
  2013.

\bibitem{Cer04}
A.~Cervin, B.~Lincoln, J.~Eker, K.~Arzen, and G.~Buttazzo, ``The jitter margin
  and its application in the design of real-time control systems,'' in
  \emph{Proceedings of the IEEE International Conference on Real-Time and
  Embedded Computing Systems and Applications}.\hskip 1em plus 0.5em minus
  0.4em\relax Gothenburg, Sweden, 2004.

\bibitem{Abe98-3}
L.~Abeni and G.~Buttazzo, ``Qos guarantee using probabilistic dealines,'' in
  \emph{Proceedings of the Euromicro Conference on Real-Time Systems}, York,
  England, June 1999.

\bibitem{psnr-tools}
C.~Kiraly, L.~Abeni, and R.~L. Cigno, ``Effects of p2p streaming on video
  quality,'' in \emph{Proceedings of the IEEE International Conference on
  Communications}.\hskip 1em plus 0.5em minus 0.4em\relax IEEE, 2010.

\bibitem{Pan86}
M.~Joseph and P.~Pandya, ``{Finding response times in a real-time system},''
  \emph{The Computer Journal}, vol.~29, no.~5, p. 390, 1986.

\bibitem{Liu73}
C.~L. Liu and J.~Layland, ``Scheduling alghorithms for multiprogramming in a
  hard real-time environment,'' \emph{Journal of the ACM}, vol.~20, no.~1,
  1973.

\bibitem{Raj98}
R.~Rajkumar, K.~Juvva, A.~Molano, and S.~Oikawa, ``Resource kernels: A
  resource-centric approach to real-time and multimedia systems,'' in
  \emph{Proceedings of the SPIE/ACM Conference on Multimedia Computing and
  Networking}, January 1998.

\bibitem{Abe01-1}
L.~Abeni and G.~Buttazzo, ``Stochastic analysis of a reservation-based
  system,'' in \emph{Proceedings of the IEEE International Parallel and
  Distributed Processing Symposium.}, San Francisco, California, April 2001.

\bibitem{bini2005numerical}
D.~Bini, G.~Latouche, and B.~Meini, \emph{Numerical methods for structured
  Markov chains}.\hskip 1em plus 0.5em minus 0.4em\relax Oxford University
  Press, 2005.

\bibitem{Atl98}
A.~K. Atlas and A.~Bestavros, ``Statistical rate monotonic scheduling,'' in
  \emph{Proceedings of the IEEE Real-Time Systems Symposium}, Madrid, Spain,
  December 1998.

\bibitem{Dia03}
J.~L. Diaz, D.~F. Garcia, K.~Kim, C.~G. Lee, L.~{Lo~Bello}, J.~M. L{\'o}pez,
  S.~L. Min, and O.~Mirabella, ``Stochastic analysis of periodic real-time
  systems,'' in \emph{Proceedings of the IEEE Real-Time Systems
  Symposium}.\hskip 1em plus 0.5em minus 0.4em\relax IEEE, 2002.

\bibitem{Dia04}
J.~L. Diaz, J.~M. L{\'o}pez, M.~Garcia, A.~M. Campos, K.~Kim, and
  L.~{Lo~Bello}, ``Pessimism in the stochastic analysis of real-time systems:
  Concept and applications,'' in \emph{Proceedings of the IEEE Real-Time
  Systems Symposium}.\hskip 1em plus 0.5em minus 0.4em\relax IEEE, 2004.

\bibitem{DBLP:conf/rtss/MaximC13}
D.~Maxim and L.~Cucu{-}Grosjean, ``Response time analysis for fixed-priority
  tasks with multiple probabilistic parameters,'' in \emph{Proceedings of the
  IEEE Real-Time Systems Symposium}, Vancouver, British Columbia, Canada,
  December 2013.

\bibitem{Cuc06}
L.~Cucu and E.~Tovar, ``A framework for the response time analysis of
  fixed-priority tasks with stochastic inter-arrival times,'' \emph{ACM SIGBED
  Review - Special issue: The work-in-progress (WIP) session of the RTSS 2005},
  vol.~3, no.~1, pp. 7--12, January 2006.

\bibitem{Gior07}
G.~A. Kaczynski, L.~{Lo~Bello}, and T.~Nolte, ``Deriving exact stochastic
  response times of periodic tasks in hybrid priority-driven soft real-time
  systems,'' in \emph{Proceedings of the IEEE Conference on Emerging
  Technologies and Factory Automation}, Patras, Greece, September 2007.

\bibitem{Mil10}
A.~Mills and J.~Anderson, ``A stochastic framework for multiprocessor soft
  real-time scheduling,'' in \emph{Proceedings of the IEEE Real-Time and
  Embedded Technology and Applications Symposium}.\hskip 1em plus 0.5em minus
  0.4em\relax Stockholm, Sweden: IEEE, April 2010.

\bibitem{Don97}
D.-I. Kang, R.~Gerber, and M.~Sakena, ``Performance-based design of distributed
  real-time systems,'' in \emph{Proceedings of the IEEE Real-Time Technology
  and Applications Symposium}, Montreal, Quebec, Canada, June 1997.

\bibitem{Ham01}
C.-J. Hamann, L.~Reuther, J.~Wolter, H.~Haertig, J.~Loser, and S.~Schonberg,
  ``Quality-assuring scheduling-using stochastic behavior to improve resource
  utilization,'' in \emph{Proceedings of the IEEE Real-Time Systems Symposium},
  London, December 2001.

\bibitem{Ref10}
K.~S. Refaat and P.-E. Hladik, ``Efficient stochastic analysis of real-time
  systems via random sampling,'' in \emph{Proceedings of the Euromicro
  Conference on Real-Time Systems}, Brussels, Belgium, July 2010.

\bibitem{LatoucheR87}
G.~Latouche and V.~Ramaswami, \emph{Introduction to matrix analytic methods in
  stochastic modeling}.\hskip 1em plus 0.5em minus 0.4em\relax Society for
  Industrial Mathematics, 1987, vol.~5.

\bibitem{neuts1995matrix}
M.~F. Neuts, \emph{Matrix-geometric solutions in stochastic models: an
  algorithmic approach}.\hskip 1em plus 0.5em minus 0.4em\relax Dover
  publications, 1995.

\bibitem{mil11-rtcsa}
A.~F. Mills and J.~H. Anderson, ``A multiprocessor server-based scheduler for
  soft real-time tasks with stochastic execution demand,'' in \emph{Proceedings
  of the IEEE International Conference on Embedded and Real-Time Computing
  Systems and Applications}, Toyama, Japan, August 2011.

\bibitem{Santos2215}
\BIBentryALTinterwordspacing
M.~Santos, B.~Lisper, G.~Lima, and V.~Lima, ``Sequential composition of
  execution time distributions by convolution,'' in \emph{Proceedings of the
  Workshop on Compositional Theory and Technology for Real-Time Embedded
  Systems}, R.~Davis and L.~T.~X. Phan, Eds., November 2011, best paper award.
  [Online]. Available: \url{http://www.es.mdh.se/publications/2215-}
\BIBentrySTDinterwordspacing

\bibitem{bernat2005probabilistic}
G.~Bernat, A.~Burns, and M.~Newby, ``Probabilistic timing analysis: An approach
  using copulas,'' \emph{Journal of Embedded Computing}, vol.~1, no.~2, pp.
  179--194, 2005.

\bibitem{Liu14}
R.~Liu, A.~Mills, and J.~Anderson, ``Independence thresholds: Balancing
  tractability and practicality in soft real-time stochastic analysis,'' in
  \emph{Proceedings of the IEEE Real-Time Systems Symposium}, Rome, Italy,
  December 2014.

\bibitem{6257562}
L.~Cucu-Grosjean, L.~Santinelli, M.~Houston, C.~Lo, T.~Vardanega, L.~Kosmidis,
  J.~Abella, E.~Mezzetti, E.~Quinones, and F.~Cazorla, ``Measurement-based
  probabilistic timing analysis for multi-path programs,'' in \emph{Proceedings
  of the Euromicro Conference on Real-Time Systems}, Pisa, Italy, July 2012.

\bibitem{etfa2012}
N.~Manica, L.~Palopoli, and L.~Abeni, ``Numerically efficient probabilistic
  guarantees for resource reservations,'' in \emph{Proceedings of the IEEE
  International Conference of Emerging Technologies and Factory Automation},
  Krakow, Poland, September 2012.

\bibitem{CassandrasL06}
C.~G. Cassandras and S.~Lafortune, \emph{Introduction to Discrete Event
  Systems}.\hskip 1em plus 0.5em minus 0.4em\relax Secaucus, NJ, USA:
  Springer-Verlag New York, Inc., 2006.

\bibitem{TR}
L.~Abeni, D.~Fontanelli, and L.~Palopoli, ``Application of the
  {Quasi-–Birth-–Death Processes} techniques to probablistic guarantees of
  soft real–time systems scheduled by resource reservations,'' DISI -
  Universit\'a di Trento, Tech. Rep., 2015,
  http://disi.unitn.it/$\sim$palopoli/publications/QBDP-TR.pdf.

\bibitem{Lloyd79}
N.~G. Lloyd, ``{Remarks on Generalising Rouché's Theorem},'' \emph{Journal of
  the London Mathematical Society}, vol. s2-20, no.~2, pp. 259--272, 1979.

\bibitem{Abe12}
L.~Abeni, N.~Manica, and L.~Palopoli, ``Efficient and robust probabilistic
  guarantees for real-time tasks,'' \emph{Journal of Systems and Software},
  vol.~85, no.~5, pp. 1147--–1156, May 2012.

\bibitem{bini2012smcsolver}
D.~Bini, B.~Meini, S.~Steff{\'e}, J.~F. P{\'e}rez, and B.~Van~Houdt,
  ``Smcsolver and q-mam: tools for matrix-analytic methods,'' \emph{ACM
  SIGMETRICS Performance Evaluation Review}, vol.~39, no.~4, pp. 46--46, 2012.

\bibitem{WATERS2014}
L.~Palopoli, L.~Abeni, and D.~Fontanelli, ``A tool for the optimal design of
  soft real--time systems,'' in \emph{Proceeding of WATERS 2014 workshop},
  Madrid Spain, July 2014.

\bibitem{DBLP:journals/tim/FontanelliMRP14}
\BIBentryALTinterwordspacing
D.~Fontanelli, F.~Moro, T.~Rizano, and L.~Palopoli, ``Vision-based robust path
  reconstruction for robot control,'' \emph{{IEEE} T. Instrumentation and
  Measurement}, vol.~63, no.~4, pp. 826--837, 2014. [Online]. Available:
  \url{http://dx.doi.org/10.1109/TIM.2013.2289091}
\BIBentrySTDinterwordspacing

\bibitem{ecrts2012}
L.~Palopoli, D.~Fontanelli, N.~Manica, and L.~Abeni, ``An analytical bound for
  probabilistic deadline,'' in \emph{Proceedings of the Euromicro Conference on
  Real-Time Systems}.\hskip 1em plus 0.5em minus 0.4em\relax Pisa, Italy: IEEE,
  September 2012.

\bibitem{Klaue03evalvid}
J.~Klaue, B.~Rathke, and A.~Wolisz, ``Evalvid - a framework for video
  transmission and quality evaluation,'' in \emph{Proceedings of the
  International Conference on Modelling Techniques and Tools for Computer
  Performance Evaluation}, 2003.

\end{thebibliography}

\end{document}